\begin{document}

\title{An Efficient VCGen-based Modular Verification of Relational Properties
\thanks{
This version of the contribution has been accepted for publication, after peer review (when applicable) but is not the Version of Record and does not reﬂect post-acceptance improvements, or any corrections.
The Version of Record is available online at: https://doi.org/10.1007/978-3-031-19849-6_28.}
}

\author{
  \mbox{}
  \hspace{-4mm}
  Lionel Blatter\inst{1,2} \and
  Nikolai Kosmatov\inst{3,4} \and
  Virgile Prevosto\inst{3} \and\\
  Pascale Le Gall\inst{5}
}

\institute{
  Karlsruhe Institute of Technology, 76131, Karlsruhe, Germany \\
  \email{firstname.lastname@kit.edu}
  \and
  Max Planck Institute for Security and Privacy, 44799, Bochum, Germany
  \and
  Université Paris-Saclay, CEA, List, 91120, Palaiseau, France\\
  \email{firstname.lastname@cea.fr}
  \and
  Thales Research \& Technology, 91120, Palaiseau, France 
  \and
  CentraleSupélec, Université Paris-Saclay, 91190 Gif-sur-Yvette France\\
  \email{firstname.lastname@centralesupelec.fr}
}

\maketitle
\vspace{-8mm}
\begin{abstract}
  Deductive verification typically relies on function contracts that 
  specify the behavior of each function for a single function call.
  \emph{Relational properties} link several function calls together
  within a single specification. They can express more advanced properties of a
  given function, such as non-interference, continuity, or monotonicity, or relate
  calls to different functions, possibly run in parallel, for instance, to show 
  the equivalence of two implementations. 
  However, relational properties cannot be expressed and
  verified directly in the
  traditional setting of modular deductive verification.
  Recent work proposed a new technique for relational property verification 
  that relies on a verification condition generator
  to produce logical formulas that must be verified
  to ensure a given relational property.  
  This paper presents an overview of this approach and 
  proposes important enhancements.  
  We integrate an optimized verification condition generator
  and extend the underlying theory to show how relational properties can be proved 
  in a modular way, where one relational property can be used to prove another one,
  like in modular verification of function contracts. 
  Our results have been fully formalized and proved sound in the \coq proof assistant.
\end{abstract}

\vspace{-5mm}
\section{Introduction}
\vspace{-2mm}
\label{sec:intro}
Modular deductive verification~\cite{DBLP:journals/cacm/Hoare69}
is used to prove that every function $f$ of a given program
respects its \emph{contract}.
Such a contract is, basically, an implication: if the given \emph{precondition}
is true before a call to $f$ and the call
terminates\footnote{Termination can be either assumed (partial correctness)
  or proved separately (full correctness) in a classical way~\cite{Floyd1967};
  for the purpose of this paper we can assume it.},
the given \emph{postcondition} is true when $f$ returns control to the caller.
However, some kinds of properties
are not easily reducible to a single function call.
Indeed, it is often necessary to express a property
that involves several functions, possibly executed in parallel,
or relates the results of several calls to the same
function for different arguments.
Such properties are known as
\emph{relational properties}~\cite{DBLP:conf/popl/Benton04}.

Examples of such relational properties include monotonicity
({i.e.} $x\leq y\Rightarrow f(x) \leq f(y)$), involving 2 calls,
or transitivity
($\cmp(x,y) \geq 0 \wedge \cmp(y,z) \geq 0 \Rightarrow \cmp(x,z)\geq 0$),
involving 3 calls.
In secure information flow~\cite{DBLP:journals/mscs/BartheDR11},
\emph{non-interference} is also a relational property. Namely,
given a partition of program variables between high-security variables
and low-security variables,
a program is said to be non-interferent if 
any two executions starting from states in which the low-security
variables have the same initial values will end up
in a final state where the low-security variables have the same values.
In other words, high-security variables cannot interfere with low-security
ones.


\begin{figure}[tb]
\begin{tabular}{l|c}
\begin{minipage}{0.23\textwidth}
//Command $\csum$:	\\
$\begin{array}{l@{}l}
	&\wif{x_1< x_2}{\\
		&\qquad x_3:= x_3 + x_1;\\
		&\qquad x_1:= x_1 + 1;\\
		&\qquad \wcall{\ysum}\\
		&}{\ \wskip\ }\\
\end{array}$
\end{minipage}
&\,
\begin{minipage}{0.74\textwidth}
Relational property $\RelPI$ between commands $\csI$ and $\csII$:	\\
\begin{equation*}
\left\{
\begin{array}{l@{}l}
\ben{\locval_2}{1} &= \ben{\locval_2}{2}
\end{array}
\right\}
\ben{
	\begin{array}{l@{}l}
	&\mbox{// $\csI$:}\\
	&x_1:= 1;\\
	&x_3:= 0;\\
	&\wcall{\ysum}\\
	\end{array}
}{1}
\sim
\ben{
	\begin{array}{l@{}l}
	&\mbox{// $\csII$:}\\
	&x_1:= 0;\\
	&x_3:= 0;\\
	&\wcall{\ysum}\\
	\end{array}
}{2}
\left\{
\begin{array}{l@{}l}
\ben{\locval_3}{1} &= \ben{\locval_3}{2}
\end{array}
\right\}
\end{equation*}
\end{minipage}
\end{tabular}
\vspace{-2mm}
\caption{Recursive command $\csum$, associated as a body with procedure name $\ysum$, 
and relational property $\RelPI$ between two commands, 
denoted $\csI$ and $\csII$, involving a call to this procedure.}
\vspace{-5mm}
\label{fig:ex-intro}
\end{figure}

\paragraph{Motivation.}
Lack of support for relational properties in verification tools
was already faced by industrial users (e.g. in~\cite{BishopBC13} for C programs).
The usual way to deal with this limitation is to use
\emph{self-composition}~\cite{DBLP:journals/mscs/BartheDR11,DBLP:conf/fm/SchebenS14,blatterKGP17},
product programs~\cite{DBLP:conf/fm/BartheCK11} or
other self-composition variants~\cite{ShemerCAV2021}.
Those techniques are based on code transformations
that are relatively tedious and error-prone. Moreover, 
they are hardly applicable in practice to real-life programs 
with pointers like in C.
Namely, self-composition requires that the compared executions
operate on completely {separated} (i.e. disjoint) memory areas, which might
be extremely difficult to ensure for complex programs with pointers.
Modular verification of relational properties is another important 
feature: the user may want to rely on some relational properties in order
to verify some other ones.

\begin{example}[relational property]
\label{ex:rel-contract}
Figure~\ref{fig:ex-intro} shows an example of a recursive command (that is, program)
$\csum$. 
We clearly distinguish the name and the body of a procedure. 
The procedure named $\ysum$ is assumed to have command $\csum$ as its body, 
so that $\csum$ recursively calls itself. 
Given three global integer variables $\locval_1$, $\locval_2$ and $\locval_3$, command $\csum$
adds to $\locval_3$ (used as an accumulator) 
the sum $\locval_1+(\locval_1+1)+\dots+(\locval_2-1)$
if $\locval_1<\locval_2$, and has no effect otherwise.

Figure~\ref{fig:ex-intro} also shows an example of a relational property 
$\RelPI$ (inspired by~\cite{DBLP:conf/fm/BartheCK11})
stating the equivalence of two commands $\csI$ and $\csII$
(assumed to be run on separate memory states), which 
assign $\locval_1$ and $\locval_3$ before calling $\ysum$.
The relational property is written here in Benton's notation~\cite{DBLP:conf/popl/Benton04}:
tags $\ben{}{1}$ and $\ben{}{2}$ are used to 
distinguish the programs linked by the property.
When variables of the linked programs have the same names,
such a tag after a variable name also helps to distinguish
the instance of the variable used
in the relational precondition and postcondition 
(written in curly braces, resp., on the left and on the right).
Property $\RelPI$ states that if $\locval_2$ has the same value before 
the execution of $\csI$ and before the execution of $\csII$, 
then  $\locval_3$ will have the same 
value after their executions.
Indeed, 
$\csI$ will compute in 
$\locval_3$ the sum $1+2+\dots+(\locval_2-1)$, while 
$\csII$ will compute in 
$\locval_3$ the sum $0+1+2+\dots+(\locval_2-1)$.

In this paper, we will 
show how relational property $\RelPI$ 
can be verified using another relational property $\RelPIII$ 
linking two runs of $\csum$ rather than using
a full functional contract of $\csum$. More precisely,
$\RelPIII$ (that will be formally defined below in Fig.~\ref{fig:ex-contract-rp2})
generalizes the situation of $\RelPI$ and
states that the resulting value of $\locval_3$ after two runs of $\csum$
will be the same if the initial state of the second run 
is exactly one iteration of $\csum$ behind that of the first run. 
\hfil\qed
\end{example}

\vspace{-2mm}
\paragraph{Approach.}
Our recent work~\cite{BlatterKPLiFM22} proposed
an alternative to self-composition
that is not based on code transformation or relational rules.
It directly relies on a standard verification condition  generator (VCGen)
to produce logical formulas to be verified (typically, with an
automated prover) to ensure a given relational property. 
This approach requires no extra code processing 
(such as sequential composition of programs or variable renaming).
Moreover, no additional separation hypotheses
are required. 
The locations of each program are separated by construction: 
each program  
has its own memory state.
This approach has been formalized on a minimal language $\langname$,
representative of the 
main issues relevant for relational property verification.
$\langname$ is a standard \textsc{While} language extended 
with annotations, procedures and pointers.
Notably, the presence of dereferences and address-of operations makes
it representative of various aliasing problems 
with (possibly, multiple) pointer dereferences of a real-life
language like C.
An example of a relational property for programs with pointers was
given in~\cite{BlatterKPLiFM22}. 
We formalize the proposed approach and prove
its soundness in the \coq proof assistant~\cite{Coq}.
Our \coq development\footnote{
Available 
at 
\url{https://github.com/lyonel2017/Relational-Spec/}.}
contains about 3700 lines.
%

\vspace{-2mm}
\paragraph{Contributions.}
We give an overview of the VCGen-based approach for relational property verification (presented in~\cite{BlatterKPLiFM22}) and enhance the underlying theory with several new features. 
The new technical contributions of this paper include:
\begin{itemize}
\item a \coq formalization and proof of soundness 
of an optimized VCGen for
language \langname, and its extension to the verification of relational properties;
\item an extension of the framework allowing not only to \emph{prove} relational properties, but
also to \emph{use} them as hypotheses in the following proofs;
\item a \coq formalization of the extended theory.
\end{itemize}
\noindent
We also provide an illustrative example and, as another minor extension, 
add the capacity to refer to old values of variables
in postconditions. 

\vspace{-3mm}
\paragraph{Outline.}
Section~\ref{sec:background}
introduces the imperative language $\langname$ used in this work.
Functional correctness 
is defined in Section~\ref{sec:funct-corr}.
The extension of functional correctness
to relational properties is presented in Section~\ref{sec:rel-prop}. Then, we prove
the soundness of an optimized VCGen 
in Section~\ref{sec:verif-cond-gener},
and show how it can be soundly extended to verify relational properties 
in Section~\ref{sec:relat-prop-verif}. Finally, we 
present related work 
in Section~\ref{sec:related}
and
concluding remarks in Section~\ref{sec:conclusion}.


\vspace{-2mm}
\section{Syntax and Semantics of the Considered Language $\langname$}
\vspace{-2mm}
\label{sec:background}
\label{sec:considered-language}


\subsection{Locations, States, and Procedure Contracts}
\vspace{-1mm}
\label{sec:loc-state-contract}
\label{sec:program-grammar}

We denote by $\Natset=\{0,1,2,\dots\}$ the set of natural numbers,
by $\Natset^*=\{1,2,\dots\}$ the set of nonzero natural numbers,
and by $\Boolset = \{\True,\False\}$ the set of Boolean values.
Let $\Locval$ be the set of program \emph{locations} 
and
$\Loccom$ the set of \emph{program (procedure) names}, and let
$\locval,\locval',\locval_1, ...$ and $\loccom,\loccom',\loccom_1, ...$
denote  metavariables ranging over those respective sets.
We assume that there exists a bijective function 
$\Natset\to\Locval$, so that $\Locval=\{\locval_i\,|\,i\in\Natset\}$.
Intuitively, we can see $i$ as the \emph{address} of location $\locval_i$.

Let $\Memvar$ be the set of functions 
$\memvar:\Natset \to \Natset$, called \emph{memory states}, and let
$\memvar,\memvar',\memvar_1, ...$ denote metavariables ranging over $\Memvar$.
A state $\memvar$ maps a location to a value using its address:
location $\locval_i$ has value $\memvar(i).$

We define the \emph{update} operation of a memory state
$\setmem{\mem}{i}{n}$, also denoted by $\memvar[i/n]$,
as the memory state $\memvar'$ mapping each address to the same value
as $\sigma$, except for $i$,
bound to $n$.
Formally,
$\setmem{\mem}{i}{n}$ is defined by the following rules:
\begin{eqnarray}
  \forall \memvar \in \Memvar, \locval_i \in \Locval, n \in \Natset, \locval_j \in \Locval.\ 
  i = j \Rightarrow \sigma[i/n] (j) = n,\\
  \forall \memvar \in \Memvar, \locval_i \in \Locval, n \in \Natset, \locval_j \in \Locval.\ 
  i  \neq j \Rightarrow \sigma[i/n] (j) = \memvar(j).
\end{eqnarray}

Let $\Memcom$ be the set of functions 
$\memcom:\Loccom \to \Com$,
called \emph{procedure environments},
mapping program names to commands (defined below),
and let $\memcom,\memcom_1, ...$  denote metavariables ranging over $\Memcom$.
We write $\body{\loccom}{\memcom}$ to refer to $\memcom(y)$,
the commands (or \emph{body}) of procedure $y$  in a given procedure environment $\memcom$.
An example of a procedure environment $\memcomsum$ is given
in Fig.~\ref{fig:ex-contract-rp2}, where $\body{\ysum}{\memcomsum}=\csum$.

\emph{Preconditions} (or \emph{assertions})
are predicates of arity one, taking as parameter a memory state
and returning an equational first-order logic formula.
Let metavariables $P, P_1, ...$ range over the set $\Precondition$ of preconditions.
For instance, using $\lambda$-notation, precondition
$P$ assessing that location $x_3$ is bound to $2$
can be defined by 
$P \triangleq \lambda \memvar. \memvar(3) = 2.$
This form will be more convenient
for relational properties (than {e.g.} $x_3 = 2$)
as it makes explicit
the memory states on which a property is evaluated.

\emph{Postconditions}
are predicates of arity two, taking as parameters two memory states
and returning an equational first-order logic formula.
Its two arguments refer to the initial and the final state. 
For instance, 
postcondition
$Q$ assessing that location $x_1$ was incremented 
(that is, $x_1=\old(x_1)+1$) can be defined in $\lambda$-notation by 
$Q \triangleq \lambda \memvar\memvar'.\, \memvar'(1)=\memvar(1)+1.$
Let metavariables $Q, Q_2, ... $ range over the set $\Postcondition$ of postconditions.

Finally, we define the set $\Memann$ of \emph{contract environments} $\memann: \Loccom  \to \Precondition \times \Postcondition$,
and metavariables $\memann,\memann_1,...$ to range over $\Memann$.
More precisely,  
$\memann$ maps a procedure name $y$ to the associated
(procedure) \emph{contract} $\memann(y)=(\pre{\loccom}{\memann},\post{\loccom}{\memann})$, composed of a pre- and a postcondition
for procedure $y$. 
As usual, a procedure contract will allow us
to specify the behavior of a single procedure call, that is,
if we start executing $\loccom$ in a memory state satisfying $\pre{\loccom}{\memann}$, and the evaluation terminates, the pair composed of the initial and final states will satisfy $\post{\loccom}{\memann}$.

\vspace{-3mm}
\subsection{Syntax for Expressions and Commands}
\vspace{-2mm}
\label{sec:exp-commands}

Let 
$\Aexp$, $\Bexp$ and $\Com$ denote respectively the sets of arithmetic expressions,
Boolean expressions and commands.
We denote by $\aexp,\aexp_1, ...$;
$\bexp,\bexp_1, ...$ and
$\com,\com_1, ...$ metavariables ranging, respectively, over those sets.
Syntax of arithmetic and Boolean expressions is given in Fig.~\ref{fig:exps-coms}.
Constants are natural numbers or Boolean values. 
Expressions use standard arithmetic, 
comparison and logic binary operators, denoted respectively $\opa ::= \{+, \times, - \}$, $\opb ::= \{\leqslant, =,\dots \}$, $\opl ::= \{\lor, \land\}$.
Since we use natural values, the
subtraction is bounded by 0, as in \coq: if $n'>n$, the result of $n-n'$ is considered to be 0.
Expressions also include locations, possibly with a dereference or address operators.

\begin{figure}[tb]
\begin{minipage}{3cm}
  \begin{align*}
    \aexp :&:= \nat & \text{natural const.}\\
           &|\ \locval & \text{location}\\
           &|\ *\locval  & \text{dereference}\\
           &|\ \&\locval & \text{address}\\
           &|\ \aexp_1 \opa \aexp_2& \text{arithm. oper.}
  \end{align*}
  \begin{align*}
    \bexp :&:= true\ |\ false & \text{Boolean const.}\\
           &|\ \aexp_1 \opb \aexp_2 & \text{comparison}\\
           &|\ \bexp_1 \opl \bexp_2\ |\ \lnot \bexp_1 & \text{logic oper.}
  \end{align*}
\end{minipage}
\hspace{3mm}
\begin{minipage}{7cm}
\begin{align*}
  \com :&:= \wskip & \text{do nothing} \\
    &|\ \locval := \aexp  & \text{\!\!\!\!\!\!\!\!\!\!\!\!\!\!direct assignment}\\
    &|\,*\locval := \aexp  & \text{\!\!\!\!\!\!\!\!\!\!\!\!\!\!\!\!\!\!\!\!\!indirect assignment}\\
    &|\ \com_1;\com_2  & \text{sequence}\\
    &|\ \wassert{P}  & \text{assertion}\\
    &|\ \wif{\bexp}{\com_1}{\com_2}  & \text{condition}\\
    &|\ \wwhilei{\bexp}{P}{\com_1}  & \text{loop}\\
    &|\ \wcall{\loccom}  & \text{\!\!\!\!\!\!\!\!\!\!\!\!\!\!procedure call}
\end{align*}
\end{minipage}
\vspace{-1mm}
\caption{Syntax of arithmetic and Boolean expressions and commands in $\langname$.}
\vspace{-5mm}
\label{fig:exps-coms}
\end{figure}

Figure~\ref{fig:exps-coms} also presents the syntax of commands in $\langname$.
Sequences, skip and conditions are standard.
An assignment can be done to a location directly or after a dereference.
Recall that a location $\locval_i$ contains as a value a natural number, 
say $v$, that can be seen in turn as the address of a location, namely $\locval_{v}$,
so the assignment $*\locval_i := \aexp$ writes the value of expression $a$ 
to the location $\locval_{v}$, 
while the address operation $\&\locval_i$ computes the address $i$ of $\locval_i$. 
An assertion command $\wassert{P}$ indicates that an assertion $P$ should
be valid at the point where the command occurs.
The loop command $\wwhilei{\bexp}{P}{\com_1}$ is always annotated with an invariant $P$.
As usual, this invariant should hold 
when we reach the command and be preserved by each loop step.
Command $\wcall{\loccom}$ is a procedure call. 
All annotations (assertions,  loop 
invariants and procedure contracts) will be ignored 
during the program execution 
and will be relevant only for program verification in Section~\ref{sec:verif-cond-gener}.
Procedures do not have
explicit parameters and return values (hence we use
the term \emph{procedure call} rather than \emph{function call}).
Instead, as in assembly code%
~\cite{Irvine:2014:ALX:2655333}, parameters and
return value(s) are shared implicitly between the caller and the callee through memory locations:
the caller must put/read the right values at the right locations before/after
the call. 
Finally, to avoid ambiguity, we regroup sequences of commands with $\{\,\}$.


\begin{figure}[t]
\centering
\begin{minipage}{2.5cm}
\begin{align*}
  \EAexpf{\nat}{\mem} & \triangleq \nat
\end{align*}
\end{minipage}
\begin{minipage}{3cm}
	\begin{align*}
	\EAexpf{\locval_i}{\mem} & \triangleq \mem(i)
	\end{align*}
\end{minipage}
\begin{minipage}{3cm}
	\begin{align*}
	\EAexpf{*\locval_i}{\mem} & \triangleq \mem(\mem(i))
	\end{align*}
\end{minipage}
\begin{minipage}{3cm}
	\begin{align*}
	\EAexpf{\&\locval_i}{\mem} & \triangleq i
	\end{align*}
\end{minipage}
\vspace{-2mm}
\caption{Evaluation of expressions in $\langname$ (selected rules).}
\vspace{-5mm}
\label{fig:exp-eval}
\end{figure}

\begin{figure}[t]
\centering
\scriptsize
\begin{minipage}{150mm}
\hspace{-2mm}
\begin{minipage}{25mm}
	\begin{equation*}
	\inferrule*
	{}
	{\eval{\wassert{P}}{\memvar}
		{\memcom}{\memvar}}
	\end{equation*}
\end{minipage}
\begin{minipage}{30mm}
  \begin{equation*}
    \inferrule*
    {\EAexpf{\aexp}{\memvar} = \nat}
    {\eval{\locval_i := \aexp}{\memvar}{\memcom}{\memvar[i/n]}}
  \end{equation*}
\end{minipage}
\begin{minipage}{35mm}
  \begin{equation*}
    \inferrule*
    {\EAexpf{\aexp}{\memvar} = \nat}
    {\eval{*\locval_i := \aexp}{\memvar}
      {\memcom}\memvar[\memvar(i)/n]}
  \end{equation*}
\end{minipage}
%
\begin{minipage}{33mm}
\begin{equation*}
  \inferrule*
  {\eval{\body{\loccom}{\memcom}}{\memvar_1}{\memcom}\memvar_2}
  {\eval{\wcall{\loccom}}{\memvar_1}{\memcom}\memvar_2}
\end{equation*}
\end{minipage}
\end{minipage}
\vspace{-3mm}
\caption{Operational semantics of commands in $\langname$ (selected rules).}
\vspace{-5mm}
\label{fig:oper-sem}
\end{figure}

\vspace{-5mm}
\subsection{Operational Semantics}
\vspace{-2mm}
\label{sec:oper-semant}


Evaluation of arithmetic and Boolean expressions in $\langname$ is defined by functions
$\EAexp$ and $\EBexp$.
Selected evaluation rules for arithmetic expressions are shown in Fig.~\ref{fig:exp-eval}.
Operations $*\locval_i$ and $\&\locval_i$ have a
semantics similar to the C language, {i.e.} dereferencing and address-of.
Semantics of Boolean expressions is standard~\cite{DBLP:books/daglib/0070910}.

Based on these evaluation functions, we can define the operational semantics
of commands 
in a given procedure environment $\memcom$.
Selected evaluation rules\footnote{For convenience of the reviewers,
full versions of Fig.~\ref{fig:exp-eval},~\ref{fig:oper-sem} are given in Appendix~\ref{app-sem}.}
are shown in Fig.~\ref{fig:oper-sem}.
As said above, both assertions and loop invariants can be seen
as program annotations that
do not influence the execution of the program itself.
Hence, command $\wassert{P}$ is equivalent to a skip. Likewise,
loop invariant $P$ has no influence
on the semantics of $\wwhilei{\bexp}{P}{\com}$.

We write $\Vdash\eval{\com}{\memvar}{\memcom}{\memvar'}$
to denote that $\eval{\com}{\memvar}{\memcom}{\memvar'}$
can be derived from the rules of Fig.~\ref{fig:oper-sem}.
Our \coq formalization,
inspired by \cite{SF},
provides a deep embedding of \langname,
with an associated parser, in  files
\texttt{Aexp.v}, \texttt{Bexp.v} and \texttt{Com.v}.

\vspace{-3mm}
\section{Functional Correctness}
\vspace{-2mm}
\label{sec:funct-corr}
We define functional correctness in a similar way to the original
\emph{Hoare triple} definition~\cite{DBLP:journals/cacm/Hoare69},
except that we also need a procedure environment $\memcom$, leading
to a quadruple denoted $\hoared{P}{\com}{Q}{\memcom}$. We will however
still refer by the term ``Hoare triple'' to the corresponding
program property, formally defined as follows.

\begin{definition}[Hoare triple]
\label{def:funct-corr-1}
Let $\com$ be a command, $\memcom$ a procedure environment,
and $P$ and $Q$ two assertions. We define a Hoare triple $\hoared{P}{\com}{Q}{\memcom}$
as follows:
\[
\hoared{P}{\com}{Q}{\memcom}\,
\triangleq\,
  \forall \memvar, \memvar' \in \Memvar.\  \Pre{P}{\memvar} \land
  (\Vdash\eval{\com}{\memvar}{\memcom}{\memvar'}) \Rightarrow \Post{Q}{\memvar}{\memvar'}.
\]

\end{definition}

\begin{figure}[tb]
	\begin{minipage}{4cm}
		\vspace{4mm} 
		Procedure environment:	
	\end{minipage}	
	\begin{minipage}{8cm}
		\begin{align*}
		\memcomsum \triangleq \left\{
		\ysum  \to \csum
		\right\}
		\end{align*}
	\end{minipage}
	\begin{minipage}{4cm}
		\vspace{4mm} 
		Hoare triple $\RelPII$:	
	\end{minipage}	
	\begin{minipage}{8cm}
		\begin{equation*}
		\memcomsum:\left\{
		\begin{array}{l@{}l}
		\True
		\end{array}
		\right\}
		\csum
		\left\{
		\begin{array}{l@{}l}
		\old(\locval_1) \geqslant \old(\locval_2) \Rightarrow \old(\locval_3) = \locval_3
		\end{array}
		\right\}
		\end{equation*}	
	\end{minipage}
	\begin{minipage}{2cm}
		\vspace{4mm} 
		Relational property $\RelPIII$:	
	\end{minipage}	
	\hspace{4mm}
	\begin{minipage}{8cm}
		\begin{equation*}
		\memcomsum:\left\{
		\begin{array}{l@{}l}
		\ben{\locval_1}{2} & < \ben{\locval_2}{2}\enskip\land \\
		\ben{\locval_2}{1} &= \ben{\locval_2}{2} \enskip\land \\
		\ben{\locval_1}{1} &= \ben{\locval_1}{2} + 1 \enskip\land\\
		\ben{\locval_3}{1} &= \ben{\locval_3}{2}+ \ben{\locval_1}{2}
		\end{array}
		\right\}
		\ben{\csum}{1}
		\sim
		\ben{\csum}{2}
		\left\{
		\begin{array}{l@{}l}
		\ben{\locval_3}{1} &= \ben{\locval_3}{2}
		\end{array}
		\right\}
		\end{equation*}
	\end{minipage}
	\vspace{-2mm}
	\caption{A procedure environment $\memcomsum$ associating procedure name $\ysum$
		with its body  $\csum$ (see Fig.~\ref{fig:ex-intro}),
		a Hoare triple $\RelPII$ for command $\csum$, 
		and a relational property $\RelPIII$ linking two runs of $\csum$.
	}
	\vspace{-5mm}
	\label{fig:ex-contract-rp2}
\end{figure}

Informally, our definition states
that, for a given $\memcom$, if a state $\memvar$ satisfies $P$ and
the execution of $\com$ on $\memvar$ terminates in a state $\memvar'$,
then $(\memvar,\memvar')$ satisfies $Q$.

\begin{example}
\label{ex:hoare-triple-example}
Figure~\ref{fig:ex-contract-rp2} gives an example of a Hoare triple
denoted $\RelPII$.
\hfil\qed
\end{example}


Next, we introduce notation $\hoarectx{\memcom}{\memann}$
to denote the fact that, 
for the given $\memann$ and  $\memcom$,
every procedure 
satisfies its contract.

\begin{definition}[Contract Validity]
Let $\memcom$ be a procedure environment and $\memann$ a contract environment.
We define contract validity $\hoarectx{\memcom}{\memann}$ as follows:
\[\hoarectx{\memcom}{\memann} \,\triangleq\,
  \forall \loccom \in \Loccom.\
  \hoared{\pre{\loccom}{\memann}}{\wcall{\loccom}}{\post{\loccom}{\memann}}{\memcom}).
\]
\end{definition}

The notion of contract validity is at the heart of
modular verification, since it allows assuming that the contracts
of the callees are satisfied during the verification of a Hoare triple.
More precisely, to state the validity of procedure contracts
without assuming anything about their bodies in our formalization,
we will consider an arbitrary choice of implementations $\memcom'$ of procedures
that satisfy the contracts, like in the first assumption of 
Theorem \ref{hoare:recursion} below. 
This theorem, 
taken from \cite[Th. 4.2]{DBLP:series/txcs/AptBO09} and reformulated for \langname
in~\cite{BlatterKPLiFM22}, states that
$\hoared{P}{\com}{Q}{\memcom}$ holds if
we can prove the contract of (the bodies 
in $\memcom$ of) all procedures in an arbitrary environment $\memcom'$
respecting the contracts, and if
the validity of contracts of $\memann$ for $\memcom$ 
implies the Hoare triple itself.
This theorem is the basis for modular verification of Hoare Triples,
as done for instance in Hoare Logic \cite{DBLP:journals/cacm/Hoare69,DBLP:books/daglib/0070910}
or verification condition generation. 

\begin{theorem}[Recursion]
  \label{hoare:recursion}
  Given a procedure environment $\memcom$
  and a contract environment $\memann$ such that 
  the following two assumptions hold:
  \begin{gather*}
    \forall \memcom' \in \Memcom.\ 
    \hoaredctx{\pre{\loccom}{\memann}}
    {\body{\loccom}{\memcom}}
    {\post{\loccom}{\memann}}{\forall \loccom \in \Loccom, \memcom'}{\hoarectx{\memcom'}{\memann}}, \\ 
    \hoaredctx{P}{\com}{Q}{\memcom}{\hoarectx{\memcom}{\memann}},
  \end{gather*}
  we have \enskip
  $
    \hoared{P}{\com}{Q}{\memcom}.
  $
\end{theorem}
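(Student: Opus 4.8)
The plan is to prove the recursion theorem by a standard fixpoint/induction argument on the operational semantics, mirroring the classical proof of \cite[Th.~4.2]{DBLP:series/txcs/AptBO09} but adapted to the concrete deep embedding of $\langname$. The key observation is that the first hypothesis is a parametric statement: for \emph{every} procedure environment $\memcom'$ that satisfies $\memann$, the bodies $\body{\loccom}{\memcom}$ (taken from the \emph{fixed} environment $\memcom$, but whose internal calls are interpreted through the \emph{arbitrary} $\memcom'$) satisfy the contracts in $\memann$. So the real content is to show that $\memcom$ itself satisfies $\memann$, i.e.\ $\hoarectx{\memcom}{\memann}$; once we have that, the second hypothesis immediately yields $\hoared{P}{\com}{Q}{\memcom}$, which is the conclusion.

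First I would unfold $\hoarectx{\memcom}{\memann}$: we must show that for every $\loccom$, every pair of states $\memvar,\memvar'$ with $\Pre{\pre{\loccom}{\memann}}{\memvar}$ and $\Vdash\eval{\wcall{\loccom}}{\memvar}{\memcom}{\memvar'}$, we have $\Post{\post{\loccom}{\memann}}{\memvar}{\memvar'}$. By the call rule of Fig.~\ref{fig:oper-sem}, the derivation of $\eval{\wcall{\loccom}}{\memvar}{\memcom}{\memvar'}$ has an immediate subderivation $\eval{\body{\loccom}{\memcom}}{\memvar}{\memcom}{\memvar'}$. The natural device is to strengthen the statement to an induction on the \emph{height} (or structure) of the derivation tree $\Vdash\eval{\com}{\memvar}{\memcom}{\memvar'}$, proving simultaneously, for every command $\com$, that any such derivation of bounded height respects the Hoare-triple behaviour dictated by $\memann$ on the calls it contains. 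Concretely: define $\hoarectx{\memcom}{\memann}^{\le k}$ as contract validity restricted to call derivations of height $\le k$, prove $\hoarectx{\memcom}{\memann}^{\le 0}$ trivially and $\hoarectx{\memcom}{\memann}^{\le k} \Rightarrow \hoarectx{\memcom}{\memann}^{\le k+1}$ by feeding $\memcom' := \memcom$ (restricted appropriately) into the first hypothesis, and conclude $\hoarectx{\memcom}{\memann}$ since every concrete derivation has some finite height.

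The main obstacle is handling the self-reference cleanly: the first hypothesis quantifies over \emph{all} $\memcom'$ satisfying $\memann$, yet we want to instantiate it with something built from $\memcom$ at the ``previous induction level'', and the body $\body{\loccom}{\memcom}$ is evaluated in $\memcom$, not in $\memcom'$ — so one must be careful that the subderivations used in the induction step are genuinely shorter and that plugging in $\memcom$ at level $k$ is legitimate. The step where the Hoare triple for the body is needed (the second inner assumption $\hoaredctx{\body{\loccom}{\memcom}}{\ldots}{\hoarectx{\memcom'}{\memann}}$) must be invoked with $\memcom'$ chosen so that $\hoarectx{\memcom'}{\memann}$ already holds at the current level — this is where the induction hypothesis does its work. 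Since the paper states this theorem is already formalized in \coq (``reformulated for \langname in~\cite{BlatterKPLiFM22}''), I expect the clean way to present it is to reduce to that prior result rather than redo the fixpoint induction from scratch; but if a self-contained argument is wanted, the derivation-height induction above is the route, with the delicate bookkeeping of $\memcom$ versus $\memcom'$ being the part that deserves the most care.
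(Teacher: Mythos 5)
Your overall reduction is the right one, and it matches the route the paper itself takes: the entire content of the theorem is establishing contract validity $\hoarectx{\memcom}{\memann}$, after which the second hypothesis immediately yields $\hoared{P}{\com}{Q}{\memcom}$. The paper gives no in-text argument at all — it defers to the \coq lemmas \texttt{recursive\_proc} and \texttt{recursive\_hoare\_triple} and to the classical recursion rule of Apt et al.\ (Th.~4.2), whose soundness proof is indeed an induction on recursion depth of the general kind you sketch. So your reading of the first hypothesis (bodies taken from the fixed $\memcom$, inner calls interpreted through an arbitrary $\memcom'$) and your identification of where the work lies are both correct.

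However, the induction step as you state it has a genuine gap. You propose to pass from $\hoarectx{\memcom}{\memann}^{\le k}$ to $\hoarectx{\memcom}{\memann}^{\le k+1}$ ``by feeding $\memcom' := \memcom$ (restricted appropriately)'' into the first hypothesis. There is no such object to feed in: $\memcom'$ ranges over genuine procedure environments (syntactic maps from names to commands), and the guard of the first hypothesis is the \emph{full, unbounded} validity $\hoarectx{\memcom'}{\memann}$, not a height-bounded variant. Instantiating $\memcom' := \memcom$ therefore requires exactly the statement being proved, and a ``level-$k$ restriction of $\memcom$'' is not a procedure environment, so the level-$k$ induction hypothesis cannot discharge the guard; as written, the step fails. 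The classical repair — and the essential idea of the Apt et al.\ proof the paper relies on — is to materialize the restriction as actual environments: define approximations $\memcom_0,\memcom_1,\dots$, e.g.\ $\memcom_0$ mapping every name to a diverging command and $\memcom_{k+1}$ mapping $\loccom$ to $\body{\loccom}{\memcom}$ with every call inlined from $\memcom_k$; prove $\hoarectx{\memcom_k}{\memann}$ for all $k$ by induction on $k$ (base case vacuous, step using the first hypothesis at $\memcom' := \memcom_k$ together with a lemma relating executions of the inlined body to executions of $\body{\loccom}{\memcom}$ in $\memcom_k$); and finally transfer to $\memcom$ by showing that every terminating derivation $\Vdash\eval{\wcall{\loccom}}{\memvar}{\memcom}{\memvar'}$ has finite recursion depth and is simulated in some $\memcom_k$. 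Without this construction (or an equivalent bridge between bounded validity of $\memcom$ and the unbounded premise of the first hypothesis), the point you flag as ``delicate bookkeeping'' is precisely where the argument breaks.
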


We refer the reader to the \coq development,
more precisely the results
\verb|recursive_proc| and
\verb|recursive_hoare_triple| in file
\verb|Hoare_Triple.v| for a complete proof of
Theorem~\ref{hoare:recursion}.

\vspace{-3mm}
\section{Relational Functional Correctness}
\vspace{-2mm}
\label{sec:rel-prop}
Relational properties can be seen as an extension of
Hoare triples.
But, instead of linking one program with two properties, the pre- and
postconditions, relational
properties link $n$ programs to two properties, called \emph{relational precondition} and \emph{relational postcondition}.
A \emph{relational precondition} or \emph{assertion} (resp., 
\emph{relational postcondition}) for $n$ programs is
a predicate taking a sequence of $n$ (resp., $2n$)
memory states and returning a first-order logic formula.
Metavariables $\rela{P}$, $\rela{P'},\dots$ 
(resp., $\rela{Q}$, $\rela{Q'}$, $\dots$)
range over the corresponding sets.
As a simple example, the relational postcondition of $\RelPI$ 
(written in Fig.~\ref{fig:ex-intro} in Benton's notation)
can be stated in $\lambda$-notation as follows: \
$\lambda \memvar_1,\memvar_2,\memvar'_1,\memvar'_2\,\,.\,\, \memvar'_1(3) = \memvar'_2(3)$.

A \emph{relational property} is a property about $n$
programs $c_1, ..., c_n$, stating that if each program $c_i$
starts in a state $\memvar_i$ and ends in a state $\memvar'_i$ such
that $\rassert{P}{\memvar_1, ..., \memvar_n}$ holds,
then $\rassert{Q}{\memvar_1, ..., \memvar_n,\memvar'_1, ..., \memvar'_n}$ holds,
where $\rela{P}$ is a relational precondition 
and $\rela{Q}$ is a relational postcondition.
We formally define relational
correctness similarly to functional correctness (cf. Def. \ref{def:funct-corr-1}),
except that we now use sequences of commands and memory states.
We abbreviate by  $\sequence{u}{\nat}$
a sequence of elements $(u_k)^\nat_{k=1}=(u_1,\dots,u_n),$  
where
$k$ ranges from $1$ to $\nat$.
If $\nat \le 0$, $\sequence{u}{\nat}$ is the empty sequence denoted $[\ ]$.
If $\nat=1$, $(u)^1$ is the singleton sequence $(u)$.

\begin{definition}[Relational Hoare Triple] \label{def:rela}
Let $\memcom$ be a procedure environment,
$\sequence{\com}{\nat}$  a sequence of $\nat$ commands ($\nat \in \Natset^*$),
$\rela{P}$ and $\rela{Q}$ relational pre- and postcondition for $\nat$ commands.
The relational correctness of $\sequence\com{n}$ with respect to $\rela{P}$ and
$\rela{Q}$, denoted
$\rhoared{P}{\sequence\com{n}}{Q}{\memcom}$, is defined as follows:
\begin{gather*}
\rhoared{P}{\sequence\com{n}}{Q}{\memcom} \triangleq\\
\forall \sequence{\memvar}{\nat}, \sequencep{\memvar}{\nat}.\
    \rPre{P}{\sequence{\memvar}{\nat}}
     \land
    (\evalrr{\com}{\memvar}{\memvar'}{\psi}{\nat})
    \Rightarrow
    \rPost{Q}{\sequence{\memvar}{\nat}}{\sequencep{\memvar}{\nat}}.
\end{gather*}
\end{definition}

For $n=1$, this notion defines a Hoare triple. 
It also generalizes Benton's notation~\cite{DBLP:conf/popl/Benton04}
for two commands:
$  \rhoared{P}{c_1\sim c_2}{Q}{\memcom}$.
As Benton's work mostly focused on comparing equivalent programs, 
using symbol $\sim$
was quite natural.

\begin{example}
\label{ex:rel-hoare-triple-example}
Relational property $\RelPIII$ introduced 
in Ex.~\ref{ex:rel-contract} is formalized (in Benton's notation)
in Fig.~\ref{fig:ex-contract-rp2}.
Below, we will illustrate modular verification of relational properties
by deducing $\RelPI$ from $\RelPIII$
and partial contract $\RelPII$ of $\csum$.
\hfil\qed
\end{example}


We will now extend Theorem~\ref{hoare:recursion} to relational contract
environments. 
A \emph{relational contract environment} $\rmemann$ 
maps a sequence of program names $\sequence\loccom\nat$
to a \emph{relational contract}, composed of a relational pre- and postcondition,
denoted 
$\rmemann(\sequence\loccom\nat)=(\,\rpre{\sequence\loccom\nat}{\rmemann},
\,\rpost{\sequence\loccom\nat}{\rmemann}\,)$.
Practical applications require only a finite number of properties, so
the relational contract can be assumed trivial for all 
except a finite number of sequences.
A relational contract environment generalizes a contract environment, since 
a standard procedure contract is a relational contract 
(for a sequence of exactly one element).
Notice that $\rmemann$ considers only one relational property for 
a given sequence $\sequence\loccom\nat$: this is not a limitation 
since several properties can be 
encoded in one contract.
We define the set of 
relational contract environments $\RMemann$,
and metavariables $\rmemann,\rmemann_0,\rmemann_1,...$ will range over $\RMemann$.

We introduce notation $\rhoarectx{\memcom}{\rmemann}$
to denote the fact that all procedures defined in $\memcom$ satisfy the relational contracts
in which they are involved in
$\rmemann$.
\begin{definition}[Relational Contract Validity]
Let $\memcom$ be a procedure environment and $\rmemann$ a relational
contract environment. We define $\rhoarectx{\memcom}{\rmemann}$ as follows:\\
\begin{adjustbox}{width=\textwidth}
\begin{minipage}{13.2cm}
\vspace{-2mm}	
\begin{gather*}
\rhoarectx{\memcom}{\rmemann}\triangleq
\forall \sequence\loccom\nat\in\Domf\rmemann,
  \,\nat>0 \Rightarrow
\hoared{\rpre{\sequence{\loccom}{\nat}}{\rmemann}}
{(\wcall{\loccom_k})_{k=1}^{\nat}}
{\rpost{\sequence{\loccom}{\nat}}{\rmemann}}{\memcom}.
\end{gather*}
\end{minipage}
\end{adjustbox}
\end{definition}

\begin{theorem}[Relational Recursion]
	\label{rela:recursion}
	Given a  procedure environment $\memcom$
	and a relational contract environment $\rmemann$ such that the following two assumptions hold:
	\[
	\begin{array}{c}
          \forall \memcom' \in \Memcom. \enskip 
	\hoaredctxTwoLines{\rpre{\sequence\loccom\nat}{\rmemann}}
	{(\body{\loccom_k}{\memcom})_{k=1}^{\nat}}
          {\rpost{\sequence\loccom\nat}{\rmemann}}
          {\forall \sequence\loccom\nat\in\Domf\rmemann,\memcom'}{\rhoarectx{\memcom}{\rmemann}},
	\end{array}
	\]
	\begin{equation*}
	\rhoaredctx{P}{\sequence\com\nat}{Q}{\memcom}{\rhoarectx{\memcom}{\rmemann}}
	\end{equation*}
	then we have \enskip
$	
	\rhoared{P}{\sequence\com\nat}{Q}{\memcom}.
$	
\end{theorem}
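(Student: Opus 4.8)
The plan is to closely mirror the proof of Theorem~\ref{hoare:recursion}, lifting it from single Hoare triples to $\nat$-ary relational ones. The only genuinely new observation is that in a relational Hoare triple the $\nat$ executions are mutually independent — they share the procedure environment $\memcom$ but never interact — so the approximation argument used in the non-relational case can be run on each of the $\nat$ runs separately and then synchronised.

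\emph{Reduction to contract validity.} By the second assumption it suffices to establish $\rhoarectx{\memcom}{\rmemann}$, i.e.\ that in $\memcom$ every tuple $\sequence\loccom\nat\in\Domf\rmemann$ with $\nat>0$ satisfies its relational contract $(\rpre{\sequence\loccom\nat}{\rmemann},\rpost{\sequence\loccom\nat}{\rmemann})$. Unfolding Definition~\ref{def:rela}, I would fix such a tuple and memory states $\sequence\memvar\nat,\sequencep\memvar\nat$ with $\rPre{\rpre{\sequence\loccom\nat}{\rmemann}}{\sequence\memvar\nat}$ and, for each $k\in\{1,\dots,\nat\}$, a derivation $\Vdash\eval{\wcall{\loccom_k}}{\memvar_k}{\memcom}{\memvar'_k}$, and aim for $\rPost{\rpost{\sequence\loccom\nat}{\rmemann}}{\sequence\memvar\nat}{\sequencep\memvar\nat}$.

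\emph{Approximating environments.} I would reuse the depth-cutoff procedure environments $\memcom_j$ ($j\in\Natset$) underlying the proof of Theorem~\ref{hoare:recursion}: $\memcom_0$ maps every procedure name to a command with no terminating execution, and $\memcom_{j+1}$ resolves the first level of calls as in $\memcom$ and all deeper calls via $\memcom_j$. The key property, coming from finiteness of derivations, is that $\Vdash\eval{\com}{\memvar}{\memcom}{\memvar'}$ holds iff $\Vdash\eval{\com}{\memvar}{\memcom_j}{\memvar'}$ holds for some $j$, and that the $\memcom_j$ are monotone in $j$. The heart of the argument is then to show $\rhoarectx{\memcom_j}{\rmemann}$ by induction on $j$. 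For $j=0$ there is (since $\nat>0$) at least the call $\wcall{\loccom_1}$, which has no terminating derivation in $\memcom_0$, so the premise of the relational triple is never satisfied and it holds vacuously. For the step, assuming $\rhoarectx{\memcom_j}{\rmemann}$, I would instantiate the first assumption with $\memcom':=\memcom_j\in\Memcom$, obtaining for every $\sequence\loccom\nat\in\Domf\rmemann$ the relational triple $\rhoared{\rpre{\sequence\loccom\nat}{\rmemann}}{(\body{\loccom_k}{\memcom})_{k=1}^{\nat}}{\rpost{\sequence\loccom\nat}{\rmemann}}{\memcom_j}$; since any terminating derivation of $\wcall{\loccom_k}$ in $\memcom_{j+1}$ ends with the procedure-call rule and exposes a terminating derivation of $\body{\loccom_k}{\memcom}$ in $\memcom_j$ on the same pair of states, feeding these $\nat$ derivations together with the initial states (which satisfy the relational precondition) yields the relational postcondition, hence $\rhoarectx{\memcom_{j+1}}{\rmemann}$.

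\emph{Closing the reduction, and the hard part.} Back in the reduction, each of the $\nat$ derivations $\Vdash\eval{\wcall{\loccom_k}}{\memvar_k}{\memcom}{\memvar'_k}$ is, by the key property above, a derivation in some $\memcom_{j_k}$; by monotonicity they all hold in $\memcom_j$ for $j:=\max_k j_k$, so $\rhoarectx{\memcom_j}{\rmemann}$ applies to the tuple $(\wcall{\loccom_k})_{k=1}^{\nat}$ and delivers $\rPost{\rpost{\sequence\loccom\nat}{\rmemann}}{\sequence\memvar\nat}{\sequencep\memvar\nat}$. This establishes $\rhoarectx{\memcom}{\rmemann}$, whence $\rhoared{P}{\sequence\com\nat}{Q}{\memcom}$ by the second assumption. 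I expect the main obstacle to be precisely this synchronisation step: the approximation lemma only supplies, per execution, \emph{some} cutoff level, and one must argue that a single level serves the whole tuple — which is exactly where the independence of the $\nat$ runs is used (taking the maximum). A secondary, more mechanical difficulty is re-establishing, now over sequences of states, the ``unfold one call level'' bookkeeping already carried out for single Hoare triples in Theorem~\ref{hoare:recursion}; in the \coq development this should mostly amount to re-running the existing proof scripts with $\nat$-indexed state tuples and the relational pre/postconditions in place of the ordinary ones.
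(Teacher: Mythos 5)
Your proposal is correct and takes essentially the same approach as the paper: the paper offers no separate argument for this theorem, saying only that its \coq proof is a straightforward extension of the proof of Theorem~\ref{hoare:recursion}, and your plan---reduce to relational contract validity via the second hypothesis, rerun the call-depth approximation underlying Theorem~\ref{hoare:recursion}, and synchronise the $\nat$ independent executions at a common cutoff level (taking the maximum and using monotonicity) before instantiating the first hypothesis with that approximation---is exactly such an extension, with the max-synchronisation being the one genuinely new ingredient. The only point needing care is that your cutoff environments $\memcom_j$ cannot literally be procedure environments that ``resolve the first level as in $\memcom$ and deeper calls via $\memcom_{j-1}$'' (an environment cannot change mid-derivation); they must be realised by syntactic inlining of bodies or by a fuel-indexed evaluation judgment, but this concerns the reused machinery of Theorem~\ref{hoare:recursion} rather than the relational lift itself.
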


The Coq proof (which is a straightforward extension of the proof of
Theorem~\ref{hoare:recursion}) is available in {\tt Rela.v}, Theorem~\texttt{recursion_relational}.

\vspace{-3mm}
\section{Optimized Verification Condition Generator}
\vspace{-2mm}
\label{sec:verif-cond-gener}
A standard way~\cite{Floyd1967} for verifying that a Hoare triple holds is to use
a verification condition generator (VCGen).
In this section, we formalize a VCGen for Hoare triples
such that if all verification conditions that it generates
are valid, then the Hoare triple is valid according to Def.~\ref{def:funct-corr-1}.
The VCGen described in this section
is based on optimizations introduced in \cite{DBLP:conf/popl/FlanaganS01}.
Such optimizations allow the VCGen to return
formulas whose size is linear with respect to the size of the program itself, and are
now part of any state-of-the-art deductive verification tool.
The key idea is to avoid splitting verification condition generation into two separated
sub-generation at each conditional.
The definition is formalized in \coq in the file
{\tt Vcg_Opt.v}, where we also prove that the verification conditions
of this optimized VCGen imply those of the naive VCGen presented in~\cite{BlatterKPLiFM22}.
This will allow us to use the optimized VCGen (or more generally any VCGen satisfying the
properties stated in Theorem~\ref{hoare:proof} below)
for the verification of relational properties as well (see Section~\ref{sec:relat-prop-verif}).

\vspace{-2mm}
\subsection{Verification Condition Generator}
\label{vcgdef}
\vspace{-2mm}
When defining the naive VCGen in~\cite{BlatterKPLiFM22},
we proposed a modular definition.
Namely, we divided it into
three functions $\Tcn$, $\Tan$ and $\Tfn$. Here, we follow the same
approach for the optimized VCGen, using three new functions $\Tc$, $\Ta$, and $\Tf$:

\begin{itemize}
\item function $\Tc$ generates the main verification condition,
expressing that the postcondition holds in the final state,
assuming auxiliary annotations hold;
\item function $\Ta$ generates auxiliary verification conditions stemming from
  assertions, loop invariants, and preconditions of called procedures;
\item finally, function $\Tf$ generates verification conditions for the
auxiliary procedures that are called by the main program, to ensure that
their bodies respect their contracts.
\end{itemize}

\begin{figure}[tb]
\centering
\begin{minipage}{12cm}
\begin{align*}
  \Tcf{\wskip}{\mem}{\mem'}{\memann}{f} & \triangleq f (\mem = \mem')\\
  \Tcf{\locval_i := \aexp}{\mem}{\mem'}{\memann}{f} &
  \triangleq
  f (\mem' = set(\mem,i,\EAexpf{\aexp}{\mem}))\\
  \Tcf{*\locval_i := \aexp}{\mem}{\mem'}{\memann}{f} &
  \triangleq
  f (\mem' = set(\mem,\mem(i),\EAexpf{\aexp}{\mem}))\\
  \Tcf{\wassert{P}}{\mem}{\mem'}{\memann}{f} & \triangleq
  f (\assert{P}{\mem}\land\mem = \mem')\\
  \begin{split}
  \Tcf{\com_0;\com_1}{\mem}{\mem'}{\memann}{f} &
  \triangleq
  \forall \mem'',
  \Tc\llbracket \com_0 \rrbracket (\mem,\mem'',\memann,\lambda p_1.\\
  &  \Tc\llbracket \com_1 \rrbracket (\mem'',\mem',\memann,\lambda p_2.
  f (p_1 \land p_2)))
  \end{split}\\
  \begin{split}
  \Tcf{\wif{\bexp}{\com_0}{\com_1}}{\mem}{\mem'}{\memann}{f} &
  \triangleq
  \Tc\llbracket \com_0 \rrbracket (\mem,\mem',\memann,\lambda p_1.\\
   &  \Tc\llbracket \com_1 \rrbracket (\mem,\mem',\memann,\lambda p_2. \\
   &  f ((b \equiv \True \Rightarrow p_1) \land (b \equiv \False \Rightarrow p_2))))
   \end{split}\\
  \Tcf{\wcall{\loccom}}{\mem}{\mem'}{\memann}{f}  &
  \triangleq
  f (\pre{\loccom}{\memann}(\mem) \land \post{\memann}{\loccom}(\mem,\mem'))\\
  \Tcf{\wwhilei{\bexp}{inv}{\com}}{\mem}{\mem'}{\memann}{f} &
  \triangleq
  f (inv\ \mem \land inv\ \mem' \land \neg(\EBexpf{\bexp}{\mem'}))
\end{align*}
\end{minipage}
\vspace{-2mm}
\caption{Definition of function $\Tc$ generating the main verification condition.
}
\label{fig:def-Tc}
\vspace{-5mm}
\end{figure}

\begin{definition}[Function $\Tc$ generating the main verification condition]
  \label{def:tc-fun}
  Given a command $\com$, two memory states $\sigma$ and $\sigma'$,
  a contract environment $\memann$, and a function
  $f$ taking a formula as argument and returning a formula,
  function $\Tc$ returns a formula defined by case analysis on $\com$ as shown in
  Fig.~\ref{fig:def-Tc}.
\end{definition}

State $\sigma$ represents the state before executing the command, while
$\sigma'$ represents the state after it.
Intuitively, the argument that gets passed to $f$ is the formula
that relates $\sigma$ and $\sigma'$ according to $\com$ itself. Thus, if $f$ is
of the form $\lambda p. p\Rightarrow Q(\sigma,\sigma')$, as in Theorem~\ref{hoare:proof}
below, the resulting formula is a verification condition for post-condition $Q$ to hold.

For $\wskip$, which does nothing, both
states are identical. For assignments, $\mem'$ is simply the update of $\mem$. An assertion
introduces a hypothesis over $\mem$ but leaves it unchanged.
For a sequence, a fresh memory state $\mem''$ is
introduced, and we compose the VCGen.
For a conditional,
if the condition evaluates to \True, we select the condition from the
\emph{then} branch, and otherwise from the \emph{else} branch.
Note that, contrary to the naive VCGen, we perform a single call to $f$,
ensuring the linearity of the formula.

The rule for calls simply assumes that before the call $\mem$ satisfies 
$\pre{y}{\phi}$ and after the call $\mem$ and $\mem'$ satisfy
$\post{y}{\phi}$.
Finally,
$\Tc$ assumes that, for a loop, both the initial state $\mem$ and the final one $\mem'$ satisfy
the loop invariant. Additionally, in $\mem'$ the loop condition evaluates to \False.
As for an assertion, the callee's precondition and the loop invariant
are just assumed to be true; function
$\Ta$, defined below, generates the corresponding proof obligations.

\begin{example}
\label{ex:Tc}
For $\com \triangleq \wif{\False}{\wskip}{\locval_1:= 2}$ we have:
\begin{gather*}
  \Tcf{\com}{\mem}{\mem'}{\memann}{\lambda p.\ p \Rightarrow  \mem'(1) = 2} \equiv \\
  (\False \equiv \True \Rightarrow \mem = \mem') \land
  (\False \equiv \False \Rightarrow \mem' = \setmem{\mem}{1}{2}) \Rightarrow
  \mem'(1) = 2. \quad \qed
\end{gather*}
\end{example}

Lemma \ref{lemma:tcoptnaive} establishes a relation between functions
$\Tc$ and $\Tcn$: the formulas generated by $\Tc$
imply the formulas generated by $\Tcn$.

\begin{lemma}
  \label{lemma:tcoptnaive}
  Given a program $\com$, a procedure contract environment $\memann$,
  a memory state $\mem$ and an assertion $P$, if we have \ 
  $ 
    \forall \mem' \in \Sigma,\,\,
    \Tcf{\com}{\mem}{\,\mem'}{\,\memann}{\,\lambda p.\, p \Rightarrow P(\mem')},
  $ 
  then we have \
  $ 
    \Tcnf{\com}{\mem}{\memann}{P}.
  $ 
\end{lemma}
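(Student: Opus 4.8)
The statement relates the optimized VCGen function $\Tc$ to the naive one $\Tcn$ from the earlier paper, so the natural strategy is structural induction on the command $\com$, mirroring the case analysis in Fig.~\ref{fig:def-Tc}. Since the paper says the optimized VCGen is formalized in {\tt Vcg\_Opt.v} and proved to imply the naive one, the LaTeX proof will likely just sketch this induction and defer to \coq. For each syntactic form of $\com$ one unfolds both $\Tcf{\com}{\mem}{\mem'}{\memann}{\lambda p.\, p \Rightarrow P(\mem')}$ and $\Tcnf{\com}{\mem}{\memann}{P}$ and checks that the former (quantified over all $\mem'$) entails the latter. The first thing I would do is recall the precise definition of $\Tcn$ from~\cite{BlatterKPLiFM22}; without it on the page, the proof must state that $\Tcn$ follows the same modular pattern but duplicates $f$ at conditionals, which is exactly the source of the quadratic blowup that the optimization removes.

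The base and simple cases are routine. For $\wskip$ and the two assignment forms, $\Tc$ applies $f$ to an equation that pins down $\mem'$ uniquely in terms of $\mem$; instantiating the universally quantified $\mem'$ in the hypothesis at that unique value, the implication $p \Rightarrow P(\mem')$ collapses to $P$ evaluated at the updated state, which is precisely what $\Tcn$ asserts. The $\wassert{P'}$ case is similar, carrying along the extra hypothesis $\assert{P'}{\mem}$. The loop case is also immediate: $\Tc$ simply hands $f$ the conjunction $inv\,\mem \land inv\,\mem' \land \neg(\EBexpf{\bexp}{\mem'})$, matching the naive VCGen's treatment of a loop as an assume of the invariant plus negated guard, so again instantiating $\mem'$ suitably does the job. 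The procedure-call case likewise just passes the assumed pre/post pair to $f$.

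The two interesting cases are sequence and conditional, and the conditional is where the real content lies. For $\com_0;\com_1$, the optimized $\Tc$ introduces a fresh $\mem''$, recursively processes $\com_0$ from $\mem$ to $\mem''$ and $\com_1$ from $\mem''$ to $\mem'$, and threads the continuation so that $f$ receives the conjunction $p_1 \land p_2$; I would apply the induction hypothesis twice (first to $\com_1$ with the intermediate state as its "initial" state, then to $\com_0$) and reassemble the naive formula, which chains the two sub-VCs through the same intermediate state. The conditional is the crux: the optimized version makes a single call to $f$ with argument $(b \equiv \True \Rightarrow p_1) \land (b \equiv \False \Rightarrow p_2)$, whereas the naive version splits into two independent obligations, one under the hypothesis $b \equiv \True$ and one under $b \equiv \False$. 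To bridge this, I would case-split on the truth value of $\EBexpf{\bexp}{\mem}$: in each branch one of the two guarded implications becomes vacuous and the other becomes an unconditional $p_i$, so from the single optimized obligation one recovers exactly the relevant naive branch obligation, and then applies the induction hypothesis for $\com_0$ or $\com_1$ accordingly. The main obstacle is purely bookkeeping: keeping the continuation $f$, the quantified $\mem'$, and the guard-case analysis aligned so that the logical shape $\bigl((b \Rightarrow p_1)\land(\lnot b \Rightarrow p_2)\bigr)\Rightarrow P$ is correctly shown to entail both $\bigl(b \Rightarrow (p_1 \Rightarrow P)\bigr)$ and $\bigl(\lnot b \Rightarrow (p_2 \Rightarrow P)\bigr)$ — elementary propositional reasoning, but the place where an informal proof is most likely to hand-wave, which is exactly why the authors push it through \coq.

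\begin{proof}[Proof sketch]
The proof proceeds by structural induction on $\com$, following the case analysis of Fig.~\ref{fig:def-Tc} and the analogous definition of $\Tcn$ in~\cite{BlatterKPLiFM22}. For $\wskip$, the assignments, $\wassert{\cdot}$, $\wcall{\cdot}$, and the loop, the function $\Tc$ passes to $f$ a formula that determines $\mem'$ (or constrains it by the invariant and negated guard); instantiating the universally quantified $\mem'$ in the hypothesis with the corresponding state makes $p\Rightarrow P(\mem')$ reduce exactly to the obligation produced by $\Tcn$. For $\com_0;\com_1$, one applies the induction hypothesis to $\com_1$ (with the fresh intermediate state as its initial state) and then to $\com_0$, and rechains the two sub-obligations through that state as $\Tcn$ does. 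For $\wif{\bexp}{\com_0}{\com_1}$, one case-splits on the value of $\EBexpf{\bexp}{\mem}$: in each case one of the guarded implications $(b\equiv\True\Rightarrow p_1)$, $(b\equiv\False\Rightarrow p_2)$ is trivially true and the other reduces to $p_i$, so the single optimized obligation entails the corresponding branch obligation of $\Tcn$, to which the induction hypothesis for $\com_0$ (resp. $\com_1$) applies. Full details are in the \coq file {\tt Vcg\_Opt.v}.
\end{proof}
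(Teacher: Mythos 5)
Your proposal is correct and matches the paper's approach: the paper's entire proof of this lemma is ``By structural induction over $\com$,'' deferring all details to the \coq development in {\tt Vcg\_Opt.v}, and your case analysis (including the propositional bridge at conditionals and the re-chaining through the intermediate state for sequences) is a faithful expansion of exactly that induction.
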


\begin{proof}
  By structural induction over $\com$.
  \qed
\end{proof}


\begin{figure}[tb]
\centering
\begin{minipage}{12cm}
\begin{align*}
  \Taf{\wskip}{\mem}{\memann} & \triangleq \True\\
  \Taf{\locval := \aexp}{\mem}{\memann} & \triangleq \True\\
  \Taf{*\locval := \aexp}{\mem}{\memann} & \triangleq \True\\
  \Taf{\wassert{P}}{\mem}{\memann} & \triangleq \assert{P}{\mem}\\
  \begin{split}
    \Taf{\com_0;\com_1}{\mem}{\memann} & \triangleq  \Taf{\com_0}{\mem}{\memann} \,\,\land \\
    & \hspace{-8mm} \forall \mem',\,\,\Tcf{\com_0}{\mem}{\,\mem'}{\,\memann}
    {\,\lambda p.\, p \Rightarrow \Taf{\com_1}{\mem'}{\memann}}
  \end{split}\\
  \begin{split}
    \Taf{\wif{\bexp}{\com_0}{\com_1}}{\mem}{\memann} & \triangleq
    (\EBexpf{\bexp}{\mem'} \Rightarrow \Taf{\com_0}{\mem}{\memann}) \,\,\land \\
    & \hspace{-8mm} (\neg(\EBexpf{\bexp}{\mem'}) \Rightarrow \Taf{\com_1}{\mem}{\memann})
  \end{split}\\
  \Taf{\wcall{\loccom}}{\mem}{\memann} & \triangleq \pre{\loccom}{\memann}(\mem)\\
  \begin{split}
  \Taf{\wwhilei{\bexp}{inv}{\com}}{\mem}{\memann} & \triangleq  inv(\mem) \,\land \\
                 & \hspace{-8mm} (\forall \mem', \,\,inv(\mem')  \,\Rightarrow\,
                           \EBexpf{\bexp}{\mem'} \Rightarrow \Taf{c}{\mem'}{\memann}) \,\,\land \\
                 & \hspace{-8mm} (\forall \mem' \mem'',\,\, inv(\mem')  \,\Rightarrow\,
                     \Tcf{\com}{\mem'}{\mem''}{\memann}{\lambda p.\, p \Rightarrow inv(\mem''))}
 \end{split}
\end{align*}
\end{minipage}
\vspace{-2mm}
\caption{Definition of function $\Ta$ generating auxiliary verification conditions.}
\label{fig:def-Ta}
\vspace{-5mm}
\end{figure}

\begin{definition}[Function $\Ta$ generating the auxiliary verification condition]
  \label{def:ta-fun}
  Given a command $\com$, a memory state $\mem$ representing the
  state before the command, and a contract environment $\memann$,
  function $\Tan$ returns a formula defined by case analysis on $\com$
  as shown in Fig.~\ref{fig:def-Ta}.
\end{definition}

Basically, $\Ta$ collects all assertions, preconditions of called procedures, as
well as invariant establishment and preservation, and lifts the corresponding
formulas to constraints on the initial state $\mem$ through the use of $\Tc$.

As for $\Tc$, the formulas generated by $\Ta$
imply those generated by $\Tan$.

\begin{lemma}
  \label{lemma:taoptnaive}
  For a given program $\com$, a procedure contract environment $\memann$,
  and a memory state $\mem$, if we have \enskip
  $ 
    \Taf{\com}{\mem}{\memann},
  $ 
  \enskip then we have \enskip
  $ 
    \Tanf{\com}{\mem}{\memann}.
  $ 
\end{lemma}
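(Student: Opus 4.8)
The plan is to proceed by structural induction on the command $\com$, exactly mirroring the case analysis used in the definitions of $\Ta$ (Fig.~\ref{fig:def-Ta}) and of its naive counterpart $\Tan$. The base cases $\wskip$, $\locval := \aexp$, $*\locval := \aexp$, $\wassert{P}$, and $\wcall{\loccom}$ are immediate: in the first three cases $\Ta$ yields $\True$ and so does $\Tan$; in the assertion and call cases $\Ta$ and $\Tan$ produce syntactically the same formula ($\assert{P}{\mem}$, resp. $\pre{\loccom}{\memann}(\mem)$), so the implication is trivial.

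The interesting cases are the three compound constructs: sequence, conditional, and loop. In each of them the formula produced by $\Ta$ contains, as subformulas, both a recursive call to $\Ta$ on a subcommand (to which the induction hypothesis applies) and an application of $\Tc$ under a continuation of the shape $\lambda p.\, p \Rightarrow \Ta\llbracket\cdot\rrbracket(\cdot,\memann)$ (in the sequence- and loop-preservation clauses). For the subformulas that are direct recursive $\Ta$-calls, I apply the induction hypothesis to replace $\Ta$ by $\Tan$. For the subformulas guarded by $\Tc$, I invoke Lemma~\ref{lemma:tcoptnaive} (or, more precisely, its generalization to an arbitrary state-predicate $P$): from $\forall\mem'.\ \Tcf{\com_0}{\mem}{\mem'}{\memann}{\lambda p.\,p\Rightarrow \Taf{\com_1}{\mem'}{\memann}}$ together with the induction hypothesis $\Taf{\com_1}{\mem'}{\memann}\Rightarrow\Tanf{\com_1}{\mem'}{\memann}$, we obtain the corresponding naive clause built with $\Tcn$ and $\Tan$. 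The conditional case additionally needs only that the branch guards $\EBexpf{\bexp}{\mem'}$ and $\neg(\EBexpf{\bexp}{\mem'})$ match between the two definitions, which they do by construction; the loop case requires handling the invariant-establishment conjunct $inv(\mem)$ (identical on both sides) and the invariant-preservation conjunct (handled via Lemma~\ref{lemma:tcoptnaive} as just described).

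The main obstacle is the loop-preservation conjunct, where the nesting is deepest: one must push the implication $\Taf{\com}{\mem''}{\memann}\Rightarrow\Tanf{\com}{\mem''}{\memann}$ (the induction hypothesis on the loop body) through a universally quantified $\Tc$-context, and this is precisely the situation Lemma~\ref{lemma:tcoptnaive} is designed to address — but one has to be careful that the continuation $f$ appearing here is $\lambda p.\ p \Rightarrow inv(\mem'')$ in $\Ta$ and must be matched against the analogous continuation implicit in $\Tan$'s loop clause. I expect that the naive $\Tan$ is defined so that these continuations line up, so the step goes through by a routine monotonicity argument on $\Tc$ combined with Lemma~\ref{lemma:tcoptnaive}; the only real care needed is bookkeeping of the fresh memory variables introduced in the sequence and loop cases. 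The \coq proof presumably discharges all of this by \texttt{induction} on $\com$ followed by \texttt{auto}-style tactics once Lemma~\ref{lemma:tcoptnaive} is in context.
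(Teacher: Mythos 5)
Your proposal matches the paper's proof, which is exactly a structural induction over $\com$ (with Lemma~\ref{lemma:tcoptnaive} available for the $\Tc$-guarded conjuncts in the sequence and loop cases); the paper simply states ``by structural induction over $\com$'' and leaves the details you spell out to the \coq development. Your additional remarks about monotonicity of $\Tc$ in its continuation and the bookkeeping of fresh states are consistent with how that induction is discharged, so no changes are needed.
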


\begin{proof}
  By structural induction over $\com$.
  \qed
\end{proof}

\smallskip
Finally, we define the function for generating the conditions
for verifying that the body of each procedure defined in $\memcom$
respects its contract defined in $\memann$.

\begin{definition}[Function $\Tf$ generating the procedure verification condition]
    $\Tf$ takes as argument two environments $\memcom$ and $\memann$ and returns
  a formula:
\begin{equation*}
  \begin{array}{ll}
  \Tff{\memann}{\memcom} \triangleq
  \forall \loccom, \mem, \mem'.\,\,\pre{\loccom}{\memann}(\mem) \,\Rightarrow & \\ &\hspace{-45mm} 
  \Taf{\body{\loccom}{\memcom}}{\mem}{\memann} \land 
   \Tcf{\body{\loccom}{\memcom}}{\mem}{\mem'}{\memann}
  {\lambda p. p \Rightarrow \post{\loccom}{\memann}(\mem,\mem')}.
  \end{array}
\end{equation*}
\end{definition}

Finally, the formulas generated by $\Tf$
imply those generated by $\Tfn$.

\begin{lemma}
  \label{lemma:tfoptnaive}
  For a given procedure environment $\memcom$, and a
  procedure contract environment $\memann$, if we have \enskip
  $ 
    \Tff{\memann}{\memcom},
  $ 
  \enskip then we have \enskip
  $ 
    \Tfnf{\memann}{\memcom}.
  $ 
\end{lemma}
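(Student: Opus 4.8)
The plan is to prove this one \emph{without} a structural induction: unlike Lemmas~\ref{lemma:tcoptnaive} and~\ref{lemma:taoptnaive}, the function $\Tf$ is already phrased in terms of $\Ta$ and $\Tc$ applied to procedure bodies, so the statement reduces to a direct composition of the two preceding lemmas. First I would unfold the definitions of $\Tff{\memann}{\memcom}$ and $\Tfnf{\memann}{\memcom}$, assume $\Tff{\memann}{\memcom}$, and fix an arbitrary procedure name $\loccom$ together with a memory state $\mem$ such that $\pre{\loccom}{\memann}(\mem)$ holds; this is exactly the context in which the naive version $\Tfnf{\memann}{\memcom}$ requires us to establish its conjunction of proof obligations for $\body{\loccom}{\memcom}$.

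Instantiating the hypothesis $\Tff{\memann}{\memcom}$ at this $\loccom$ and $\mem$, while leaving the bound state $\mem'$ universally quantified, yields two facts: first, $\Taf{\body{\loccom}{\memcom}}{\mem}{\memann}$; and second, $\forall \mem'.\ \Tcf{\body{\loccom}{\memcom}}{\mem}{\mem'}{\memann}{\lambda p.\ p \Rightarrow \post{\loccom}{\memann}(\mem,\mem')}$. To the first fact I would apply Lemma~\ref{lemma:taoptnaive}, obtaining $\Tanf{\body{\loccom}{\memcom}}{\mem}{\memann}$. To the second I would apply Lemma~\ref{lemma:tcoptnaive} with command $\body{\loccom}{\memcom}$, contract environment $\memann$, the now-fixed state $\mem$, and the assertion $P \triangleq \lambda \mem'.\ \post{\loccom}{\memann}(\mem,\mem')$; its conclusion is precisely $\Tcnf{\body{\loccom}{\memcom}}{\mem}{\memann}{P}$, the main verification condition of the naive VCGen for the body with respect to the procedure's postcondition. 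Conjoining the two consequences and re-generalizing over $\loccom$ and $\mem$ gives $\Tfnf{\memann}{\memcom}$.

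I expect the only delicate point to be the bookkeeping around the arguments of the postcondition: $\post{\loccom}{\memann}$ is a binary predicate on the initial and final states, whereas Lemma~\ref{lemma:tcoptnaive} is stated for a unary assertion $P$ on the final state, so the application must be set up with the initial state $\mem$ already fixed and curried into $P$, and with the quantifier order ($\forall \mem'$ outside the $\Tc$ call) matched to the hypothesis shape of that lemma. Once the shapes of the quantifiers and of the postcondition argument are lined up in this way, the rest is a routine combination of the two earlier lemmas, and it is carried out in exactly this form in the \coq development.
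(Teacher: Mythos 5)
Your proof is correct and follows the same route as the paper, whose proof of this lemma is exactly the composition of Lemmas~\ref{lemma:tcoptnaive} and~\ref{lemma:taoptnaive}; your unfolding of $\Tf$, the instantiation at a fixed $\loccom$ and $\mem$ with $\mem'$ left quantified, and the choice $P \triangleq \lambda \mem'.\ \post{\loccom}{\memann}(\mem,\mem')$ are just the details the paper leaves implicit.
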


\begin{proof}
  Using Lemmas~\ref{lemma:tcoptnaive} and~\ref{lemma:taoptnaive}.
  \qed
\end{proof}

The definition of the optimized VCGen and its link to the naive version
can be found in file {\tt Vcg_Opt.v} of the \coq development.

\vspace{-2mm}
\subsection{Hoare Triple Verification}
\label{hoareverif}
\vspace{-1mm}

Using the VCGen defined in Sec.~\ref{vcgdef}, we can state
the theorem establishing how a Hoare Triple can be verified.
The proof can be found in file
{\tt Correct.v} of the \coq development.

\begin{theorem}[Soundness of VCGen]
  \label{hoare:proof}
  Assume that 
we have
$\Tff{\memann}{\memcom}$ and
  \begin{gather*}
    \forall \mem.\,\,\assert{P}{\mem} \Rightarrow \Taf{\com}{\mem}{\memann}, \\
    \forall \mem, \mem'.\,\,\assert{P}{\mem} \Rightarrow
    \Tcf{\com}{\mem}{\,\mem'}{\,\memann} {\,\lambda p. \,p \,\Rightarrow\, Q(\mem,\mem')}.
  \end{gather*}
  Then we have
    $\hoared{P}{\com}{Q}{\memcom}$.
\end{theorem}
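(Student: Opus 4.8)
The plan is to prove Theorem~\ref{hoare:proof} by reduction to the soundness of the naive VCGen already established in~\cite{BlatterKPLiFM22}, using the three bridging Lemmas~\ref{lemma:tcoptnaive}, \ref{lemma:taoptnaive} and~\ref{lemma:tfoptnaive}. First I would recall the soundness statement for the naive VCGen: if $\Tfnf{\memann}{\memcom}$ holds, and for all $\mem$ with $\assert{P}{\mem}$ we have $\Tanf{\com}{\mem}{\memann}$, and for all $\mem$ with $\assert{P}{\mem}$ we have $\Tcnf{\com}{\mem}{\memann}{Q}$, then $\hoared{P}{\com}{Q}{\memcom}$. The strategy is then to discharge each of these three naive hypotheses from the corresponding optimized hypothesis in the statement of Theorem~\ref{hoare:proof}.

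Concretely, I would proceed in three steps. From $\Tff{\memann}{\memcom}$, Lemma~\ref{lemma:tfoptnaive} immediately gives $\Tfnf{\memann}{\memcom}$. Next, fix an arbitrary $\mem$ with $\assert{P}{\mem}$; the first optimized hypothesis gives $\Taf{\com}{\mem}{\memann}$, and Lemma~\ref{lemma:taoptnaive} yields $\Tanf{\com}{\mem}{\memann}$. For the main condition, fix $\mem$ with $\assert{P}{\mem}$; the second optimized hypothesis gives, for every $\mem'$, the formula $\Tcf{\com}{\mem}{\mem'}{\memann}{\lambda p.\, p \Rightarrow Q(\mem,\mem')}$. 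Instantiating $P(\mem')$ in Lemma~\ref{lemma:tcoptnaive} with the postcondition slice $\lambda \mem'.\, Q(\mem,\mem')$ (for the fixed $\mem$), the hypothesis of that lemma is exactly what we have, so we conclude $\Tcnf{\com}{\mem}{\memann}{(\lambda \mem'.\, Q(\mem,\mem'))}$, which is the naive main condition for post-condition $Q$ starting from $\mem$. Feeding these three facts into the naive soundness theorem gives $\hoared{P}{\com}{Q}{\memcom}$.

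I expect the main obstacle to be purely bookkeeping rather than conceptual: getting the universal quantifier on $\mem'$ and the currying of $Q$ to line up between the shape of Lemma~\ref{lemma:tcoptnaive} (which is stated for a one-argument assertion $P$ on the final state) and the two-argument postcondition $Q$ used here. The fix is to work pointwise in $\mem$ and view $Q(\mem,\cdot)$ as the relevant unary assertion, which is exactly how Theorem~\ref{hoare:proof} is phrased. A second, minor point of care is that the naive soundness lemma invoked must be the exact form proved in the \coq development (file {\tt Correct.v}); as long as its statement matches the three-part structure above, the argument is a direct composition. No induction is needed at this level — all the inductive work lives inside the three bridging lemmas — so the proof is short: apply Lemma~\ref{lemma:tfoptnaive}, then Lemma~\ref{lemma:taoptnaive}, then Lemma~\ref{lemma:tcoptnaive}, and invoke naive soundness. $\qed$
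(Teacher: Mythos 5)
Your proposal matches the paper's own proof, which likewise derives the result from the soundness of the naive VCGen (Th.~3 of~\cite{BlatterKPLiFM22}) combined with Lemmas~\ref{lemma:tcoptnaive}, \ref{lemma:taoptnaive} and~\ref{lemma:tfoptnaive}; you simply spell out the instantiation details (fixing $\mem$ and viewing $Q(\mem,\cdot)$ as the unary assertion in Lemma~\ref{lemma:tcoptnaive}) that the paper leaves implicit. The argument is correct and essentially identical in structure.
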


\begin{proof}
  By soundness of the naive VCGen \cite[Th.\,3]{BlatterKPLiFM22} and Lemmas~\ref{lemma:tcoptnaive}, \ref{lemma:taoptnaive}, \ref{lemma:tfoptnaive}.
\qed
\end{proof}

\vspace{-4mm}
\section{Modular Verification of Relational Properties}
\vspace{-2mm}
\label{sec:relat-prop-verif}
In this section, we propose a modular verification method for
relational
properties (defined in Section~\ref{sec:rel-prop})
using the optimized VCGen defined in Section~\ref{sec:verif-cond-gener}
(or, more
generally, any VCGen respecting Theorem~\ref{hoare:proof}).
First, we define the function $\Tro$ for the recursive call of 
$\Tc$ on a sequence of commands and memory states.

\begin{definition}[Function $\Tro$]
  \label{def:Trf}
  Given  a sequence of commands $\sequence\com\nat$ and
  a sequence of memory states $\sequence\mem\nat$,
  a contract environment
  $\memann$ and a function $f$ taking as argument a formula and returning a formula,
  function $\Tro$ is defined by induction on $\nat$
  for the basis ($\nat = 0$) and inductive case ($\nat \in \Natset^*$) as follows:
    \begin{gather*}
      \Trof{[\ ]}{[\ ]}{[\ ]}{\memann}{f} \triangleq f(True),\\
      \Trof{\sequence{\com}{\nat}}
      {\sequence{\mem}{\nat}}{\sequencep{\mem}{\nat}}{\memann}{f} \triangleq\\
      \Tcf{\com_\nat}{\mem_\nat}{\mem_\nat'}{\memann}
      {\,\,\lambda p_n.\
        \Trof{\sequence{\com}{\nat-1}}
        {\sequence{\mem}{\nat-1}}
        {\sequencep{\mem}{\nat-1}}{\memann}
        {\lambda p_{n-1}.\ f(p_n \land p_{n-1})}}.
    \end{gather*}
\end{definition}

Intuitively, like in Def.~\ref{def:tc-fun}, the argument that gets passed to $f$ is the formula that relates
the $\nat$ pre-states $\sequence{\mem}{\nat}$ to the $\nat$ post-states $\sequencep{\mem}{\nat}$ when all $\sequence{\com}{\nat}$ are executed. Again, if $f$ is of the form $\lambda p. p\Rightarrow \rPost{Q}{\sequence{\memvar}{\nat}}{\sequencep{\memvar}{\nat}}$, the resulting formula is a verification condition for the relational postcondition $\rela{Q}$ to hold.
More concretely, for $\nat=2$, and $f$ as above, we obtain:
\begin{gather*}
\Trof{(\com_1, \com_2)}{(\mem_1, \mem_2)}{(\mem_1', \mem_2')}{\memann}{\lambda p.p\Rightarrow  \rPost{Q}{(\mem_1,\mem_2)}{(\mem_1', \mem_2')}} \equiv \\
\Tcf{\com_2}{\mem_2}{\mem_2'}{\memann}
{\,\lambda p_2. \Tcf{\com_1}{\mem_1}{\mem_1'}{\memann}
	{\,\lambda p_1. p_2 \land p_1 \Rightarrow \rPost{Q}{(\mem_1,\mem_2)}{(\mem_1', \mem_2')}}}.
\end{gather*}

\smallskip
We similarly define a notation for the 
auxiliary verification conditions
for a sequence of $\nat$ commands. Basically, this is the
conjunction of the auxiliary verification conditions generated
by $\Ta$ on each individual command.
\begin{definition}[Function $\Taor$]
  \label{def:Tarf}
  Given a sequence of commands
  $\sequence{\com}{\nat}$ and a sequence of memory states $\sequence{\mem}{\nat}$,
  we define function $\Taor$ as follows:
  \begin{equation*}
    \Taorf{\sequence{\com}{\nat}}{\sequence{\mem}{\nat}}{\memann} \triangleq
    \bigwedge_{i=1}^n \Taf{\com_i}{\mem_i}{\memann}.
  \end{equation*}
\end{definition}

A standard contract over a single procedure $\loccom$ can be used directly whenever
there is a call to $\loccom$. For a relational contract over $\sequence{\loccom}{\nat}$, things are more complicated: there is not a single program point where we can apply the relational contract. Instead, we have to somehow track in the generated formulas all the calls that have been made, and to guard the application of the relational contract by a constraint stating that all the appropriate calls have indeed taken place. In order to achieve that, we start by defining a notation for the conjunction of a sequence of procedure calls and associated memory states:
\begin{definition}[Functions $\Proccall$ and $\Procpred$] 
  \begin{gather*}
      \Proccallf{\loccom}{\mem}{\mem'}{\memcom}
      \triangleq\ 
      \Vdash\eval{\wcall{\loccom}}{\mem}{\memcom}{\mem'},
   \\
    \Procpredf{\sequence{\loccom}{\nat}}{\sequence{\mem}{\nat}}{\sequencep{\mem}{\nat}}{\memcom}
    \triangleq
    \bigwedge_{i=1}^n \Proccallf{\loccom_i}{\mem_i}{\mem'_i}{\memcom}.
  \end{gather*}

\end{definition}

Then, we can define function $\Tpr$ translating relational contracts
into a logical formula, using $\Procpred$ to guard its application
with tracked calls.
\begin{definition}[Function $\Tpr$]
  \begin{gather*}
    \Tprf{\rmemann}{\memcom}  \triangleq\\
    \forall \sequence{\loccom}{\nat},\sequence{\mem}{\nat}, \sequencep\mem\nat,
    \,\, \nat>0 \,\,\Rightarrow\,\,
    \Procpredf{\sequence{\loccom}{\nat}}
    {\sequence{\mem}{\nat}}
    {\sequencep{\mem}{\nat}}{\memcom} \,\,\Rightarrow\,\,\\
    \rpre{\sequence{\loccom}{\nat}}{\rmemann}\sequence{\mem}{\nat} \,\,\Rightarrow\,\,
    \rpost{\sequence{\loccom}{\nat}}{\rmemann}\sequence\mem\nat \sequencep{\mem}\nat.
  \end{gather*}
\end{definition}

We now define function $\Phicall$ to lift a relational procedure contract 
with an associated tracked call predicate and reduce it to a standard contract.
\begin{equation*}
  \Phicallf{\rmemann}{\memcom} \triangleq
  \lambda  \loccom.
  (\lambda \mem. \rpre{(y)^1}{\rmemann} (\mem)^1,
  \lambda \mem \mem'. \rpost{(y)^1}{\rmemann}(\mem)^1 (\mem')^1
  \land \Proccallf{\loccom}{\mem}{\mem'}{\memcom}).
\end{equation*}


Finally, using function $\Tpr$ and $\Phicall$,
we can define function
$\Tfor$ for generating the verification condition for verifying
that the bodies of each sequence of procedures
respect the relational contract defined in $\rmemann$: thanks to $\Phicall$, each call instruction will result in a corresponding $\Proccall$ occurrence in the generated formula, so that it will be possible to make use of the relational contracts hypotheses in $\Tpr$ when the appropriate sequences of calls occur.
\begin{definition}[Function $\Tfor$]
  \begin{gather*}
    \Tforf{\rmemann}{\memcom} \triangleq \\
    \forall \sequence\loccom\nat,\sequence\mem\nat,\sequencep\mem\nat, \memcom',\,\,\,\,
    \rpre{\sequence{\loccom}{\nat}}{\rmemann} \,\,\Rightarrow\,\,
    \Tprf{\rmemann}{\memcom'} \,\,\Rightarrow\,\,\\
    \Taorf{(\body{\loccom_k}{\,\memcom})_{k=1}^{\nat}}
    {\,\sequence\mem\nat}{\,\Phicallf{\rmemann}{\memcom'}} \,\,\land\\
    \Trof{(\body{\loccom_k}{\memcom})_{k=1}^{\nat}}
    {\,\sequence\mem\nat}{\,\sequencep\mem\nat}{\,\Phicallf\rmemann{\memcom'}}
    {\,\lambda p. p \,\Rightarrow\, \rpost{\sequence{\loccom}{\nat}}{\rmemann})}.
  \end{gather*}
\end{definition}

Using functions $\Tro$, $\Taor$ and $\Tfor$, we can now give the main result
of this paper, i.e.
that the verification of relational
properties with the VCGen is correct.
\begin{theorem}[{Soundness of relational VCGen}]
  \label{rela:proof}
  For any sequence of commands $\sequence\com\nat$,
  contract environment $\rmemann$, procedure environment $\memcom$,
  and relational pre- and postcondition $\rela{P}$ and $\rela{Q}$,
  if the following three properties hold:
  \begin{gather}
    \label{hyp:1}
    \Tforf{\rmemann}{\memcom},\\
    \label{hyp:2}
    \forall \sequence\mem\nat,\memcom',\,\,\,\,
    \rassert{P}{\sequence\mem\nat} \land
    \Tprf{\rmemann}{\memcom} \,\,\Rightarrow\,\,
    \Taorf{\sequence\com\nat}{\,\sequence\mem\nat}{\,\Phicallf{\rmemann}{\memcom'}},\\
    \forall \sequence\mem\nat,\sequencep\mem\nat,\memcom',\,\,\,\,
    \rassert{P}{\sequence\mem\nat} \,\land\,
    \Tprf{\rmemann}{\memcom} \,\,\Rightarrow \,\, \nonumber\\
    \label{hyp:3}
    \Trof{\sequence\com\nat}{\sequence\mem\nat}
    {\,\sequencep\mem\nat}{\,\Phicallf{\rmemann}{\memcom'}}
    {\,\,\lambda p.\, p \Rightarrow \rassert{Q}{\sequence\mem\nat,\sequencep\mem\nat}},
  \end{gather}
  then we have \enskip
  $
    \rhoared{P}{\sequence\com\nat}{Q}{\memcom}.
  $
\end{theorem}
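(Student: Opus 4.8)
The plan is to reduce the statement to the Relational Recursion theorem (Theorem~\ref{rela:recursion}) and then discharge the two resulting proof obligations with the VCGen hypotheses~\eqref{hyp:1}--\eqref{hyp:3}. First I would apply Theorem~\ref{rela:recursion} to $\memcom$ and $\rmemann$, which leaves two goals: (A) that for an arbitrary procedure environment $\memcom'$ satisfying every relational contract of $\rmemann$ (under $\rhoarectx{\memcom'}{\rmemann}$), the bodies $(\body{\loccom_k}{\memcom})_{k=1}^{\nat}$ satisfy the relational contract of each $\sequence\loccom\nat\in\Domf\rmemann$ when run in $\memcom'$; and (B) that, under $\rhoarectx{\memcom}{\rmemann}$, the target triple $\rhoared{P}{\sequence\com\nat}{Q}{\memcom}$ holds.

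The bridge between these goals and the machinery of Section~\ref{sec:relat-prop-verif} is the observation that, once definitions are unfolded, relational contract validity amounts exactly to the translated-contract hypothesis used everywhere below: for any $\memcom_0$, $\rhoarectx{\memcom_0}{\rmemann}$ is $\Tprf{\rmemann}{\memcom_0}$, because $\Proccallf{\loccom}{\mem}{\mem'}{\memcom_0}$ is precisely the operational premise appearing in the Hoare triple that defines contract validity, and $\Procpred$/$\Tpr$ merely repackage ``every tracked sequence of calls satisfies its relational contract'' as a guarded implication. Restricting this to singleton sequences also shows that the standard contract environment $\Phicallf{\rmemann}{\memcom_0}$ is semantically valid in $\memcom_0$, the extra $\Proccallf{\loccom}{\mem}{\mem'}{\memcom_0}$ conjunct it adds to each postcondition being vacuously valid. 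Hence in both (A) and (B) I may freely replace an assumption $\rhoarectx{\memcom_0}{\rmemann}$ by $\Tprf{\rmemann}{\memcom_0}$ and assume that the contracts produced by $\Phicall$ hold.

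The core of the argument is a soundness lemma for $\Tro$ and $\Taor$ over a sequence of commands: for any $\memcom_0$, any contract environment $\memann$ whose contracts are semantically valid in $\memcom_0$, any commands $\sequence\com\nat$ and states $\sequence\mem\nat,\sequencep\mem\nat$ with $\Vdash\eval{\com_i}{\mem_i}{\memcom_0}{\mem_i'}$ for every $i$, and any formula $R$, if $\Taorf{\sequence\com\nat}{\sequence\mem\nat}{\memann}$ and $\Trof{\sequence\com\nat}{\sequence\mem\nat}{\sequencep\mem\nat}{\memann}{\lambda p.\,p\Rightarrow R}$ hold, then $R$ holds. I would prove it by induction on $\nat$: the base case is immediate since $\Tro$ on empty sequences reduces to $R$; in the inductive step the outermost layer of $\Tro$ is a single occurrence of $\Tcf{\com_\nat}{\mem_\nat}{\mem_\nat'}{\memann}{\cdot}$, which I collapse with a variant of Theorem~\ref{hoare:proof} in which the hypothesis $\Tff{\memann}{\memcom}$ is weakened to semantic validity of $\memann$'s contracts (this is the actual content of VCGen soundness once callee contracts are assumed valid, which is exactly the situation after the recursion theorem has been applied; the conjunct $\Taf{\com_\nat}{\mem_\nat}{\memann}$ of $\Taor$ discharges the assertion and callee-precondition obligations), so that the formula threaded to the continuation becomes provable and the induction hypothesis applies to $\com_1,\dots,\com_{\nat-1}$. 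Goal (B) then follows by fixing $\sequence\mem\nat,\sequencep\mem\nat$ with $\rassert{P}{\sequence\mem\nat}$ and the executions in $\memcom$, reading off $\Tprf{\rmemann}{\memcom}$ and validity of $\Phicallf{\rmemann}{\memcom}$ from $\rhoarectx{\memcom}{\rmemann}$, instantiating~\eqref{hyp:2} and~\eqref{hyp:3} with $\memcom'$ taken to be $\memcom$ and with these states (so their premises are met), and applying the core lemma with $\memann:=\Phicallf{\rmemann}{\memcom}$ and $R:=\rassert{Q}{\sequence\mem\nat,\sequencep\mem\nat}$. Goal (A) is symmetric: fix the arbitrary $\memcom'$ and a sequence $\sequence\loccom\nat\in\Domf\rmemann$, obtain $\Tprf{\rmemann}{\memcom'}$ from $\rhoarectx{\memcom'}{\rmemann}$, instantiate the $\memcom'$-quantifier of~\eqref{hyp:1} (that is, of $\Tforf{\rmemann}{\memcom}$) with it, and apply the core lemma with $\memcom_0:=\memcom'$, $\memann:=\Phicallf{\rmemann}{\memcom'}$ and $R:=\rpost{\sequence\loccom\nat}{\rmemann}$ evaluated on the relevant states.

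The main obstacle will be the inductive step of the core lemma and its single-command ingredient: because $\Tc$ and $\Tro$ are written in continuation-passing style, the hypothesis formula that ultimately reaches $f$ sits underneath all the binders introduced by the later commands, so Theorem~\ref{hoare:proof} cannot be reused verbatim as a black box --- one needs a lower-level statement describing how $\Tc$ rewrites its functional argument once the command has executed, essentially the intermediate lemma behind Theorem~\ref{hoare:proof} in the \coq development. A second, more conceptual point --- the reason $\Phicall$ is defined as it is --- must be checked when~\eqref{hyp:3} is discharged: the $\Proccall$ conjuncts that $\Phicall$ inserts into procedure postconditions have to be at once trivially valid (so that the core lemma applies) and strong enough that, once a full matching sequence of calls has been tracked into a $\Procpred$, the corresponding $\nat$-ary relational contract of $\rmemann$ becomes usable through $\Tpr$ --- which is exactly the mechanism that lets one relational property help prove another.
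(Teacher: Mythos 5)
Your plan — apply Theorem~\ref{rela:recursion}, identify $\rhoarectx{\cdot}{\rmemann}$ with $\Tprf{\rmemann}{\cdot}$, note that the contracts produced by $\Phicall$ are semantically valid (the extra $\Proccall$ conjunct being true by construction), and discharge both recursion obligations via an induction on the command sequence using the single-command VCGen soundness under semantically valid contracts — is exactly the structure of the paper's (\coq) proof in \texttt{Correct\_Rela.v}, including your correct observation that Theorem~\ref{hoare:proof} must be replaced by its underlying lemma that assumes callee-contract validity rather than $\Tff{\memann}{\memcom}$. So the proposal is correct and follows essentially the same route as the paper.
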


In other words, a relational property is valid
if all relational procedure contracts are valid, and, assuming the relational
precondition holds, both the 
auxiliary verification conditions
and the main relational verification condition hold.
The corresponding
\coq formalization is available in file \texttt{Rela.v},
and the \coq proof of Theorem~\ref{rela:proof} is in file \texttt{Correct_Rela.v}.

\begin{example}
  \label{ex:rel-prop-proof}
Consider $\memcom=\memcomsum$ and $\rmemann$ which encodes $\RelPII$ and $\RelPIII$.
The relational property $\RelPI$ of Fig.~\ref{fig:ex-intro} can now be proven valid
in a modular way, 
using $\RelPII$ and $\RelPIII$,
by the proposed technique based on Theorem~\ref{rela:proof} (see file {\tt Examples.v}
of the \coq development).
For instance, (\ref{hyp:3}) becomes the formula of Fig.~\ref{fig:ex-third-hypo-th}.
There, the relational precondition is given by (\ref{rpre}), while
the simplified (instantiated for sequence $({\ysum},{\ysum})$)
translation of the relational contracts $\Tprf{\rmemann}{\memcom}$
is given by (\ref{tprf}). Finally, (\ref{trof}) gives the
main verification condition:
\begin{gather*}
\Trof{(\csI,\csII)}{\,(\sigma_1,\sigma_2)}
{\,(\sigma_1',\sigma_2')}
{\,\Phicallf{\rmemann}{\memcom'}}{\,\lambda p. p \Rightarrow \mem_1'[3] = \mem_2'[3]}, \mbox{\,where\ }\Phicallf{\rmemann}{\memcom'}= \\
\{\ysum \rightarrow
(\lambda \mem.\, \True,
\lambda \mem,\mem'.\, \mem[1] \geqslant \mem[2] \Rightarrow \mem[3] = \mem'[3] \land
\Proccallf{\ysum}{\mem}{\mem'}{\memcom'})\}.
\end{gather*}
\noindent
Long for a manual proof, such formulas are well-treated by 
solvers.
\qed
\end{example}

\begin{figure}[tb]
\centering
\begin{minipage}{12cm}
\begin{gather}
  \forall \mem_1,\mem_2,\mem_1',\mem_2',\memcom.\nonumber\\
  \boxed{\mem_1(1) = \mem_2(1)} \label{rpre} \\
  \land\nonumber \\
  \boxed{
  \begin{gathered}
    (\forall \mem_1,\mem_2,\mem_1',\mem_2'.\\
    \Proccallf{\ysum}{\mem_1}{\mem_1'}{\memcom} \land
    \Proccallf{\ysum}{\mem_2}{\mem_2'}{\memcom} \land \\
    \mem_2(1) < \mem_2(2) \land \mem_1(2) = \mem_2(2) \land \\
    \mem_1(1) = \mem_2(1) + 1 \land
    \mem_1(3) = \mem_2(3) + \mem_2(1)\\
    \Rightarrow \\
    \mem_1'(3) = \mem_2'(3)
    )\end{gathered}
  }
  \label{tprf}\\
  \Rightarrow \nonumber\\
  \boxed{
    \begin{gathered}
      \forall \mem_1'',\mem_1''',\mem_2'',\mem_2'''. \\
      \mem_1'' = set(\mem_1,1,1) \land
      \mem_1''' = set(\mem_1'',3,0) \land \\
      ((\mem_1'''(1) \geqslant \mem_1'''(2) \Rightarrow \mem_1'''(3) = \mem_1'(3))
      \land \Proccallf{\ysum}{\mem_1'''}{\mem_1'}{\memcom}) \land \\
      \mem_2'' = set(\mem_2,1,0) \land
      \mem_2''' = set(\mem_2'',3,0) \land \\
      ((\mem_2'''(1) \geqslant \mem_2'''(2) \Rightarrow \mem_2'''(3) = \mem_2'(3))
      \land \Proccallf{\ysum}{\mem_2'''}{\mem_2'}{\memcom}) \\
      \Rightarrow \\
      \mem_1'(3) = \mem_2'(3)
    \end{gathered}
  }
  \label{trof}
\end{gather}
\end{minipage}
\vspace{-2mm}
\caption{Assumption (\ref{hyp:3}) of Theorem~\ref{rela:proof} illustrated for 
property $\RelPI$ of Fig.~\ref{fig:ex-intro}.}
\label{fig:ex-third-hypo-th}
\vspace{-5mm}
\end{figure}




\vspace{-5mm}
\section{Related Work}
\vspace{-2mm}
\label{sec:related}
\paragraph{Relational Property Verification.}

Significant work has been done on relational program verification 
(see~\cite{47rela,next700} for a detailed
state of the art). We discuss below some of the efforts the most closely related to our work.


Various relational logics have been designed as extensions
to Hoare Logic, such as Relational Hoare Logic~\cite{DBLP:conf/popl/Benton04}
and Cartesian Hoare Logic~\cite{SousaD16}.
As our approach, those logics consider
for each command a set of associated memory states
in the very rules of the system,
thus avoiding additional separation assumptions.
Limitations of these logics are often
the absence of support for aliasing
or a limited form of relational properties. For instance,
Relational Hoare Logic supports only relational properties with two commands
and Cartesian Hoare Logic supports only $k$-safety properties (relational properties on
the same command).
Our method has an advanced support of aliasing and
supports a very general definition of relational properties,
possibly between several dissimilar commands.

Self-compositon~\cite{DBLP:journals/mscs/BartheDR11,DBLP:conf/fm/SchebenS14,blatterKGP17}
and its derivations~\cite{DBLP:conf/fm/BartheCK11,ShemerCAV2021,DBLP:conf/esop/EilersMH18}
are well-known appro\-aches to deal with relational properties.
This is in particular due to their flexibility: self-composition
methods can be applied as a preprocessing step to different verification approaches.
For example, self-composition is used in combination with symbolic execution and model checking
for verification of voting functions~\cite{BeckertBormerEA2016}.
Other examples are the use of self-composition in combination with verification condition generation
in the context of the Java language~\cite{GuillaumeCAD20} or the C language~\cite{blatterKGP17,blatterKGPP18}.
In general, the support of aliasing of C programs
in these last efforts is very limited due the problems mentioned earlier.
Compared to these techniques,
where self-composition is applied before the generation of verification
conditions (and therefore requires taking care about separation of memory
states of the considered programs),
our method can be seen as relating the considered programs' semantics
directly at the level of the verification conditions,
where separation of their memory states is already ensured,
thus avoiding the need to take care of this separation explicitly.

%

Finally, another advanced approach for relational verification is the
translation of the relational problem into Horn clauses and
their proof using constraint solving%
~\cite{DBLP:journals/jar/KieferKU18,HiroshCAV2021}.
The benefit of constraint solving lies in the ability to
automatically find relational invariants
and complex self-composition derivations.
Moreover, the translation
of programs into Horn clauses,
done by tools
like \reve\footnote{\url{https://formal.kastel.kit.edu/projects/improve/reve/}},
results in
formulas similar to those generated by our
VCGen.
Therefore, like our approach,
relational verification with constraint solving
requires no additional separation hypothesis in presence of aliasing.

\vspace{-2mm}
\paragraph{Certified Verification Condition Generation.}
In a broad sense, this work continues previous
efforts
in formalization and
mechanized proof of program language semantics, analyzers and
compilers, 
such as~\cite{SF,leroy-formal-2008,herms-certification-2013,BeringerA19,JourdanLBLP15,JungKJBBD18,WilsJacobsCertCProg2021,KrebbersLW14,BlazyMP15,ParthasarathyMuellerSummers21}.
%
%
%
%
%
%
%
%
Generation of certificates (in Isabelle)
for the \textsc{Boogie} verifier is presented 
in~\cite{ParthasarathyMuellerSummers21}.
The certified deductive
verification tool WhyCert~\cite{herms-certification-2013}
comes with a similar soundness result
for its verification condition generator.
Its formalization follows an alternative proof approach,
based on co-induction, while our proof relies  on induction.
WhyCert is syntactically closer to the C language and the
\acsl specification language~\cite{ACSL},
while our proof uses a simplified
language, but with a richer aliasing model.
Furthermore, we provide a formalization and a soundness
proof for relational verification, which was not considered
in WhyCert or in~\cite{ParthasarathyMuellerSummers21}.

Our previous work~\cite{BlatterKPLiFM22} 
presented a method for relational property verification based on a naive VCGen.
To the best of our knowledge, the present work
is the first proposal of
\emph{modular} relational property verification based on an \emph{optimized} VCGen
for a representative language with procedure calls and aliases
with a full mechanized formalization and proof of soundness in \coq.

\vspace{-3mm}
\section{Conclusion} 
\vspace{-2mm}
\label{sec:conclusion}
We have presented in this paper
an overview of a method for modular verification of relational properties using 
an optimized verification
condition generator, without relying on code transformations (such as
self-composition) or making additional
separation hypotheses in case of aliasing.
This method has been fully formalized in \coq, and the soundness of recursive
relational verification using a verification condition generator
(itself formally proved correct) for a simple language with procedure calls and aliasing
has been formally established.

This work opens the door for interesting future work.
Currently, for relational properties, product programs~\cite{DBLP:conf/fm/BartheCK11}
or other self-composition optimizations~\cite{ShemerCAV2021}
are the standard approach to deal with complex loop constructions.
We expect that user-provided coupling invariants and
loop properties can avoid having to rely on code transformation methods.
Showing this in our framework is the next step, before the
investigation of termination and co-termination
\cite{DBLP:conf/cade/HawblitzelKLR13},\cite{HiroshCAV2021} for
extending the modularity of relational contracts.

\vspace{-3mm}
\bibliographystyle{splncs04}
\bibliography{biblio}

\begin{thebibliography}{10}
\providecommand{\url}[1]{\texttt{#1}}
\providecommand{\urlprefix}{URL }
\providecommand{\doi}[1]{https://doi.org/#1}

\bibitem{DBLP:series/txcs/AptBO09}
Apt, K., {de Boer}, F., Olderog, E.: Verification of Sequential and Concurrent
  Programs. Texts in Computer Science, Springer (2009).
  \doi{10.1007/978-1-84882-745-5}

\bibitem{DBLP:conf/fm/BartheCK11}
Barthe, G., Crespo, J.M., Kunz, C.: Relational verification using product
  programs. In: Proc. of the 17th International Symposium on Formal Methods (FM
  2011). LNCS, vol.~6664, pp. 200--214. Springer (2011).
  \doi{10.1007/978-3-642-21437-0\_17}

\bibitem{DBLP:journals/mscs/BartheDR11}
Barthe, G., D'Argenio, P.R., Rezk, T.: Secure information flow by
  self-composition. J. of Mathematical Structures in Computer Science
  \textbf{21}(6),  1207--1252 (2011). \doi{10.1017/S0960129511000193}

\bibitem{ACSL}
Baudin, P., Cuoq, P., Filli\^{a}tre, J.C., March\'{e}, C., Monate, B., Moy, Y.,
  Prevosto, V.: {ACSL}: {ANSI/ISO C} Specification Language (2021),
  \url{https://frama-c.com/html/acsl.html}

\bibitem{BeckertBormerEA2016}
Beckert, B., Bormer, T., Kirsten, M., Neuber, T., Ulbrich, M.: Automated
  verification for functional and relational properties of voting rules. In:
  Proc. of the 6th International Workshop on Computational Social Choice
  ({COMSOC} 2016) (2016)

\bibitem{DBLP:conf/popl/Benton04}
Benton, N.: Simple relational correctness proofs for static analyses and
  program transformations. In: Proc. of the 31st {ACM} {SIGPLAN-SIGACT}
  Symposium on of Programming Languages ({POPL} 2004). pp. 14--25. {ACM}
  (2004). \doi{10.1145/964001.964003}

\bibitem{BeringerA19}
Beringer, L., Appel, A.W.: Abstraction and subsumption in modular verification
  of {C} programs. In: Proc. of the Third World Congress on Formal Methods -
  ({FM} 2019). LNCS, vol. 11800, pp. 573--590. Springer (2019).
  \doi{10.1007/978-3-030-30942-8\_34}

\bibitem{BishopBC13}
Bishop, P.G., Bloomfield, R.E., Cyra, L.: Combining testing and proof to gain
  high assurance in software: {A} case study. In: Proc. of the 24th
  International Symposium on Software Reliability Engineering ({ISSRE} 2013).
  pp. 248--257. IEEE (2013). \doi{10.1109/ISSRE.2013.6698924}

\bibitem{blatterKGP17}
Blatter, L., Kosmatov, N., \mbox{Le Gall}, P., Prevosto, V.: {RPP:} automatic
  proof of relational properties by self-composition. In: Proc. of the 23rd
  International Conference on Tools and Algorithms for the Construction and
  Analysis of Systems ({TACAS} 2017). LNCS, vol. 10205, pp. 391--397. Springer
  (2017). \doi{10.1007/978-3-662-54577-5\_22}

\bibitem{blatterKGPP18}
Blatter, L., Kosmatov, N., \mbox{Le Gall}, P., Prevosto, V., Petiot, G.: Static
  and dynamic verification of relational properties on self-composed {C} code.
  In: Proc. of the 12th International Conference on Tests and Proofs ({TAP}
  2018). LNCS, vol. 10889, pp. 44--62. Springer (2018).
  \doi{10.1007/978-3-319-92994-1\_3}

\bibitem{BlatterKPLiFM22}
Blatter, L., Kosmatov, N., Prevosto, V., Le~Gall, P.: Certified verification of
  relational properties. In: Proc. of the 17th International Conference on
  integrated Formal Methods ({iFM} 2022). LNCS, Springer (Jun 2022), to appear

\bibitem{BlazyMP15}
Blazy, S., Maroneze, A., Pichardie, D.: Verified validation of program slicing.
  In: Proc. of the 2015 Conference on Certified Programs and Proofs ({CPP}
  2015). pp. 109--117. {ACM} (2015). \doi{10.1145/2676724.2693169}

\bibitem{GuillaumeCAD20}
Dufay, G., Felty, A.P., Matwin, S.: Privacy-sensitive information flow with
  {JML}. In: Proc. of the 20th Conference on Automated Deduction ({CADE} 2005).
  LNCS, vol.~3632, pp. 116--130. Springer (2005). \doi{10.1007/11532231\_9}

\bibitem{DBLP:conf/esop/EilersMH18}
Eilers, M., M{\"{u}}ller, P., Hitz, S.: Modular product programs. In: Proc. of
  the 27th European Symposium on Programming ({ESOP} 2018). LNCS, vol. 10801,
  pp. 502--529. Springer (2018). \doi{10.1007/978-3-319-89884-1\_18}

\bibitem{DBLP:conf/popl/FlanaganS01}
Flanagan, C., Saxe, J.B.: Avoiding exponential explosion: generating compact
  verification conditions. In: Proc. of the 28th {ACM SIGPLAN} Symposium on
  Principles of Programming Languages ({POPL} 2001). pp. 193--205. ACM (2001).
  \doi{10.1145/360204.360220}

\bibitem{Floyd1967}
Floyd, R.W.: Assigning meanings to programs. In: Proc. of Symposia in Applied
  Mathematics. vol. 19 (Mathematical Aspects of Computer Science), p. 19–32
  (1967). \doi{10.1090/psapm/019/0235771}

\bibitem{DBLP:conf/cade/HawblitzelKLR13}
Hawblitzel, C., Kawaguchi, M., Lahiri, S.K., Reb{\^{e}}lo, H.: Towards
  modularly comparing programs using automated theorem provers. In: Proc. of
  the 24th International Conference on Automated Deduction (CADE 2013). LNCS,
  vol.~7898, pp. 282--299. Springer (2013). \doi{10.1007/978-3-642-38574-2\_20}

\bibitem{herms-certification-2013}
Herms, P.: Certification of a {Tool} {Chain} for {Deductive} {Program}
  {Verification}. Phd thesis, Université Paris Sud - Paris XI (Jan 2013),
  \url{https://tel.archives-ouvertes.fr/tel-00789543}

\bibitem{DBLP:journals/cacm/Hoare69}
Hoare, C.A.R.: An axiomatic basis for computer programming. Communications of
  the ACM  \textbf{12}(10),  576--580 (1969). \doi{10.1145/363235.363259}

\bibitem{JourdanLBLP15}
Jourdan, J., Laporte, V., Blazy, S., Leroy, X., Pichardie, D.: A
  formally-verified {C} static analyzer. In: Proc. of the 42nd Annual {ACM}
  {SIGPLAN-SIGACT} Symposium on Principles of Programming Languages ({POPL}
  2015). pp. 247--259. {ACM} (2015). \doi{10.1145/2676726.2676966}

\bibitem{JungKJBBD18}
Jung, R., Krebbers, R., Jourdan, J., Bizjak, A., Birkedal, L., Dreyer, D.: Iris
  from the ground up: {A} modular foundation for higher-order concurrent
  separation logic. J. Funct. Program.  \textbf{28}, ~e20 (2018).
  \doi{10.1017/S0956796818000151}

\bibitem{DBLP:journals/jar/KieferKU18}
Kiefer, M., Klebanov, V., Ulbrich, M.: Relational program reasoning using
  compiler {IR} - combining static verification and dynamic analysis. J. of
  Automated Reasoning  \textbf{60}(3),  337--363 (2018).
  \doi{10.1007/s10817-017-9433-5}

\bibitem{Irvine:2014:ALX:2655333}
Kip, I.: Assembly Language for x86 Processors. Prentice Hall Press, 7th edn.
  (2014)

\bibitem{KrebbersLW14}
Krebbers, R., Leroy, X., Wiedijk, F.: Formal {C} semantics: {CompCert} and the
  {C} standard. In: Proc. of the 5th International Conference on Interactive
  Theorem Proving ({ITP} 2014), Held as Part of the Vienna Summer of Logic
  ({VSL} 2014). LNCS, vol.~8558, pp. 543--548. Springer (2014).
  \doi{10.1007/978-3-319-08970-6\_36}

\bibitem{leroy-formal-2008}
Leroy, X., Blazy, S.: Formal verification of a {C}-like memory model and its
  uses for verifying program transformations. Journal of Automated Reasoning
  \textbf{41}(1),  1--31 (2008)

\bibitem{next700}
Maillard, K., Hritcu, C., Rivas, E., {Van Muylder}, A.: The next 700 relational
  program logics. In: Proc. of the 47th {ACM SIGPLAN} Symposium on Principles
  of Programming Languages ({POPL} 2020). vol.~4, pp. 4:1--4:33 (2020).
  \doi{10.1145/3371072}

\bibitem{47rela}
Naumann, D.A.: Thirty-seven years of relational {Hoare} logic: Remarks on its
  principles and history. In: Proc. of the 9th International Symposium on
  Leveraging Applications of Formal Methods ({ISoLA} 2020). LNCS, vol. 12477,
  pp. 93--116. Springer (2020). \doi{10.1007/978-3-030-61470-6\_7}

\bibitem{ParthasarathyMuellerSummers21}
Parthasarathy, G., M{\"{u}}ller, P., Summers, A.J.: Formally validating a
  practical verification condition generator. In: Proc. of the 33rd
  International Conference on Computer Aided Verification ({CAV} 2021). LNCS,
  vol. 12760, pp. 704--727. Springer (2021).
  \doi{10.1007/978-3-030-81688-9\_33}

\bibitem{SF}
Pierce, B.C., {Azevedo de Amorim}, A., Casinghino, C., Gaboardi, M., Greenberg,
  M., Hri\c{t}cu, C., Sj\"{o}berg, V., Yorgey, B.: Logical Foundations.
  Software Foundations series, volume 1, Electronic textbook (2018),
  \url{http://www.cis.upenn.edu/~bcpierce/sf}

\bibitem{DBLP:conf/fm/SchebenS14}
Scheben, C., Schmitt, P.H.: Efficient self-composition for weakest precondition
  calculi. In: Proc. of the 19th International Symposium on Formal Methods
  ({FM} 2014). LNCS, vol.~8442, pp. 579--594. Springer (2014).
  \doi{10.1007/978-3-319-06410-9\_39}

\bibitem{ShemerCAV2021}
Shemer, R., Gurfinkel, A., Shoham, S., Vizel, Y.: Property directed self
  composition. In: Proc. of the 31th International Conference on Computer Aided
  Verification ({CAV} 2019). LNCS, vol. 11561, pp. 161--179. Springer (2019).
  \doi{10.1007/978-3-030-25540-4\_9}

\bibitem{SousaD16}
Sousa, M., Dillig, I.: {Cartesian Hoare Logic for Verifying k-safety
  Properties}. In: Proc. of the 37th Conference on Programming Language Design
  and Implementation (PLDI 2016). pp. 57--69. ACM (2016).
  \doi{10.1145/2908080.2908092}

\bibitem{Coq}
{The Coq Development Team}: {The Coq Proof Assistant} (2021),
  \url{https://coq.inria.fr/}

\bibitem{HiroshCAV2021}
Unno, H., Terauchi, T., Koskinen, E.: Constraint-based relational verification.
  In: Proc. of the 33th International Conference on Computer Aided Verification
  ({CAV} 2021). LNCS, vol. 12759, pp. 742--766. Springer (2021).
  \doi{10.1007/978-3-030-81685-8\_35}

\bibitem{WilsJacobsCertCProg2021}
Wils, S., Jacobs, B.: Certifying {C} program correctness with respect to
  compcert with verifast. CoRR  \textbf{abs/2110.11034} (2021),
  \url{https://arxiv.org/abs/2110.11034}

\bibitem{DBLP:books/daglib/0070910}
Winskel, G.: The formal semantics of programming languages - an introduction.
  Foundation of computing series, {MIT} Press (1993)

\end{thebibliography}

\clearpage
\appendix

\begin{center}
  {\huge \bf Appendix}
\end{center}
\noindent
This appendix is provided for convenience of the reviewers, not for publication.
\section{Complete Semantics of Language $\langname$}
\label{app-sem}

\subsection{Evaluation of Arithmetic and Boolean Expressions in $\langname$}
\label{app-sem-exp-eval}

We provide
a complete list of rules for evaluation of arithmetic and Boolean expressions 
in $\langname$ in Fig.~\ref{fig:exp-eval-full}.
Evaluation of arithmetic and Boolean expressions in $\mathcal L$ is defined by functions
$\EAexp$ and $\EBexp$.
As mentioned above, the subtraction is lower-bounded by 0.
Operations $*\locval_i$ and $\&\locval_i$ have a
semantics similar to the C language, {i.e.} dereferencing and address-of.
Semantics of Boolean expressions is standard~\cite{DBLP:books/daglib/0070910}.

%
%

\begin{figure}[h]
  \centering
  \begin{minipage}{6cm}
    \begin{align*}
      \EAexpf{\nat}{\mem} & \triangleq \nat\\
      \EAexpf{\locval_i}{\mem} & \triangleq \mem(i)\\
      \EAexpf{*\locval_i}{\mem} & \triangleq \mem(\mem(i))\\
      \EAexpf{\&\locval_i}{\mem} & \triangleq i\\
      \EAexpf{\aexp_1 \opa \aexp_2}{\mem} & \triangleq
                                            \EAexpf{\aexp_1}{\mem} \opa \EAexpf{\aexp_2}{\mem}
    \end{align*}
  \end{minipage}
  \begin{minipage}{4cm}
    \begin{align*}
      \EBexpf{true}{\mem} & \triangleq \True\\
      \EBexpf{false}{\mem} & \triangleq \False\\
      \EBexpf{\aexp_1 \opb \aexp_2}{\mem} & \triangleq
                                            \EAexpf{\aexp_1}{\mem} \opa \EAexpf{\aexp_2}{\mem}\\
      \EBexpf{\bexp_1 \opl \bexp_2}{\mem} & \triangleq
                                            \EBexpf{\bexp_1}{\mem} \opl \EBexpf{\bexp_2}{\mem}\\
      \EBexpf{\neg\bexp}{\mem} & \triangleq  \neg\EBexpf{\bexp}{\mem}
    \end{align*}
  \end{minipage}
  \vspace{-1mm}
  \caption{Evaluation of arithmetic and Boolean expressions in $\langname$.}
  \vspace{-5mm}
  \label{fig:exp-eval-full}
\end{figure}

\subsection{Operational Semantics of Commands in $\langname$ in $\langname$}
\label{app-sem-sper-sem}

We provide
a complete operational semantics of commands in $\langname$ in Fig.~\ref{fig:oper-sem-full}.

\begin{figure}[tb]
  \centering
  \begin{minipage}{3cm}
    \begin{equation*}
      \inferrule*
      {}
      {\eval{\wskip}{\memvar}{\memcom}{\memvar}}
    \end{equation*}
  \end{minipage}
  \begin{minipage}{4cm}
    \begin{equation*}
      \inferrule*
      {\EAexpf{\aexp}{\memvar} = \nat}
      {\eval{\locval_i := \aexp}{\memvar}{\memcom}{\memvar[i/n]}}
    \end{equation*}
  \end{minipage}
  \begin{minipage}{4cm}
    \begin{equation*}
      \inferrule*
      {\EAexpf{\aexp}{\memvar} = \nat}
      {\eval{*\locval_i := \aexp}{\memvar}
        {\memcom}\memvar[\memvar(i)/n]}
    \end{equation*}
  \end{minipage}

  \begin{minipage}{5cm}
    \begin{equation*}
      \inferrule*
      {}
      {\eval{\wassert{P}}{\memvar}
        {\memcom}{\memvar}}
    \end{equation*}
  \end{minipage}
  \begin{minipage}{6cm}
    \begin{equation*}
      \inferrule*
      {\EBexpf{\bexp}{\memvar} = \True\\
        \eval{\com_1}{\memvar_1}{\memcom}\memvar_2}
      {\eval{\wif{\bexp}{\com_1}{\com_2}}{\memvar_1}{\memcom}\memvar_2}
    \end{equation*}
  \end{minipage}

  \begin{minipage}{5cm}
    \begin{equation*}
      \inferrule*
      {\eval{\com_1}{\memvar_1}{\memcom}\memvar_2\\
        \eval{\com_2}{\memvar_2}{\memcom}\memvar_3}
      {\eval{\com_1;\com_2}{\memvar_1}{\memcom}\memvar_3}
    \end{equation*}
  \end{minipage}
  \begin{minipage}{6cm}
    \begin{equation*}
      \inferrule*
      {\EBexpf{\bexp}{\memvar} = \False\\
        \eval{\com_2}{\memvar_1}{\memcom}\memvar_2}
      {\eval{\wif{\bexp}{\com_1}{\com_2}}{\memvar_1}{\memcom}\memvar_2}
    \end{equation*}
  \end{minipage}

  \begin{equation*}
    \inferrule*
    {\EBexpf{\bexp}{\memvar_1} = \True\\
      \eval{\com_1}{\memvar_1}{\memcom}\memvar_2\\
      \eval{\wwhilei{\bexp}{P}{\com}}{\memvar_2}{\memcom}\memvar_3}
    {\eval{\wwhilei{\bexp}{P}{\com}}{\memvar_1}{\memcom}\memvar_3}
  \end{equation*}

  \begin{minipage}{6cm}
    \begin{equation*}
      \inferrule*
      {\EBexpf{\bexp}{\memvar} = \False}
      {\eval{\wwhilei{\bexp}{P}{\com}}{\memvar}{\memcom}\memvar}
    \end{equation*}
  \end{minipage}
  \begin{minipage}{5.5cm}
    \begin{equation*}
      \inferrule*
      {\eval{\body{\loccom}{\memcom}}{\memvar_1}{\memcom}\memvar_2}
      {\eval{\wcall{\loccom}}{\memvar_1}{\memcom}\memvar_2}
    \end{equation*}
  \end{minipage}
  \vspace{-1mm}
  \caption{Operational semantics of commands in $\langname$.}
  \vspace{-5mm}
  \label{fig:oper-sem-full}
\end{figure}



\end{document}